\documentclass[draftclsnofoot,onecolumn]{IEEEtran}
\usepackage{amsmath,cite,amsfonts,amssymb,psfrag}
\usepackage{graphicx}
\usepackage{rotating}
\usepackage{citesort}

\newtheorem{remark}{Remark}
\newtheorem{definition}{Definition}
\newtheorem{proposition}{Proposition}
\newtheorem{lemma}{Lemma}

\newtheorem{proof}{Proof}

\begin{document}
\title{Lossy Source Coding with Reconstruction Privacy}
\IEEEoverridecommandlockouts
\author{
\authorblockN{Kittipong Kittichokechai, Tobias J. Oechtering, and Mikael Skoglund}\\
\authorblockA{School of Electrical Engineering and the ACCESS Linnaeus Center\\
KTH Royal Institute of Technology, Stockholm, Sweden
}
}
\maketitle
\begin{abstract}
 We consider the problem of lossy source coding with side information under a privacy constraint that the re-construction sequence at a decoder should be kept secret to a certain extent from another terminal such as an eavesdropper, a sender, or a helper. We are interested in how the reconstruction privacy constraint at a particular terminal affects the rate-distortion tradeoff. In this work, we allow the decoder to use a random mapping, and give inner and outer bounds to the rate-distortion-equivocation region for different cases where the side information is available non-causally and causally at the decoder. In the special case where each reconstruction symbol depends only on the source description and current side information symbol, the complete rate-distortion-equivocation region is provided. A binary example illustrating a new tradeoff due to the new privacy constraint, and a gain from the use of a stochastic decoder is given.
\end{abstract}
\section{Introduction}\label{sec:Chap_ENDUSER_introduction}
With the emergence of Internet of Things (IoT), the advance of cloud computing, and the growing predominance of smart devices, we are transitioning into a future scenario where almost everyone and everything will be connected. Significant amount of data will be exchanged among users and service providers which inevitably leads to a privacy concern. A user in the network could receive different versions of certain information from different sources. Apart from being able to process the information efficiently, the user may also wish to protect the privacy of his/her action which is taken based on the received information. In this work, we address the privacy concern of the final action/decision taken at the end-user in an information theoretic setting. More specifically, we consider the problem of lossy source coding under the privacy constraint of the end-user (decoder) whose goal is to reconstruct a sequence subject to a distortion criterion. The privacy concern of the end-user may arise due to the presence of an external eavesdropper or a legitimate terminal such as a sender or a helper who is curious about the final reconstruction. We term the privacy criterion as \emph{end-user privacy}, and use the normalized equivocation of the reconstruction sequence at a particular terminal as a privacy measure.

\begin{figure}[t]
    \centering
    \psfrag{X1}[][]{\small{$X_1^{n}$}}
    \psfrag{X2}[][]{\small{$X_2^{n}$}}
    \psfrag{X3}[][]{\small{$X_3^{n}$}}
    \psfrag{C}[][]{\small{Central unit}}
    \psfrag{r1}[][]{\small{$R_1$}}
    \psfrag{r2}[][]{\small{$R_2$}}
    \psfrag{r3}[][]{\small{$R_3$}}
    \psfrag{h1}[][]{\small{$H(\hat{X}^n|X_1^n)/n \geq \triangle_1$}}
    \psfrag{h2}[][]{\small{$H(\hat{X}^n|X_2^n)/n \geq \triangle_2$}}
    \psfrag{h3}[][]{\small{$H(\hat{X}^n|X_3^n)/n \geq \triangle_3$}}
    \psfrag{Xhat}[][]{\small{$\hat{X}^{n}, Ed(F^{(n)}(X_1^{n},X_2^{n},X_3^{n}),\hat{X}^{n}) \leq D$}}
    \includegraphics[width=0.65\textwidth]{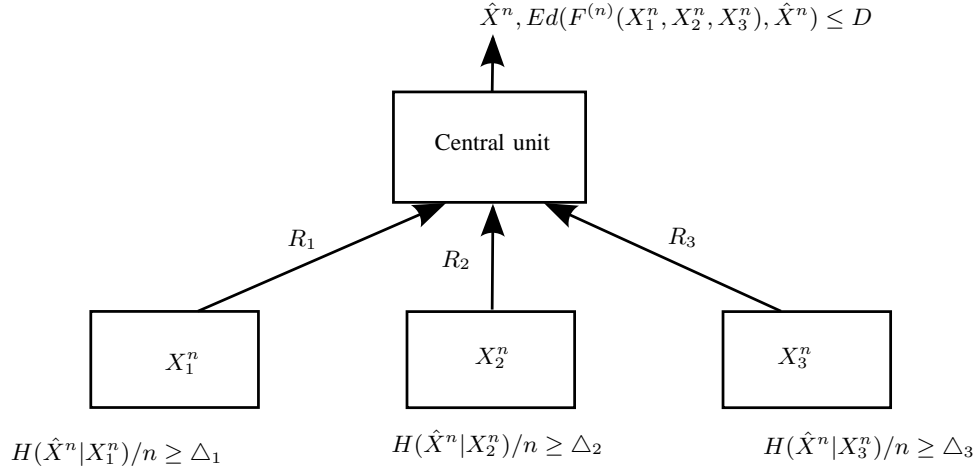}
    \caption{Multiterminal source coding with end-user privacy.}
    \label{fig:Chap_ENDUSER_intro}
\end{figure}

Let us consider Fig. \ref{fig:Chap_ENDUSER_intro} where there exist several agents collecting information for the central unit. Assuming that the agents communicate efficient representations of the correlated sources to the central unit through rate-limited noiseless links so that the central unit is able to estimate a value of some function of the sources $F^{(n)}(X_1^{n},X_2^{n},X_3^{n})$ satisfying the distortion criterion.  However, there is a privacy concern regarding the reconstruction sequence (final decision/action) at the central unit, that it should be kept secret from the agents. This gives rise to a new tradeoff between the achievable rate-distortion pair and privacy of the reconstruction sequence. That is, the central unit should reconstruct a sequence in such a way that it satisfies both distortion and equivocation constraints which can be contradicting. Potential applications of the illustrated setting include those in the area of distributed cloud services where the end-user (central unit) can process information received from the cloud service providers (agents), while guaranteeing that his/her final action will be kept private from the providers, at least to a certain extent.

In this work, we study a special case of Fig. \ref{fig:Chap_ENDUSER_intro} where there are two sources, one of which is available directly at the decoder. For example, we let $X^n$ be the source to be encoded, and $Y^n$ be the uncoded source available at the decoder (see Fig. \ref{fig:Chap_ENDUSER_eve}). Alternatively, we may view $Y^n$ as correlated side information provided by a \emph{helper}\footnote{Here we term a node who only has access to $Y^n$ as a \emph{helper} because it connects to the setting in Fig. \ref{fig:Chap_ENDUSER_intro} in a broader sense.}. The reconstruction sequence $\hat{X}^n$ is an estimate of the value of some component-wise function $F^{(n)}(X^n,Y^n)$, where the $i^{\text{th}}$ component $F^{(n)}_i(X^n,Y^n)=F(X_i,Y_i)$ for $i=1,\ldots,n$. Without the end-user privacy constraint, this corresponds to the problem of source coding with side information at the decoder or the Wyner-Ziv problem \cite{WynerZiv}, \cite{Yamamoto}. We consider three scenarios where the end-user privacy constraint is imposed at different nodes, namely the eavesdropper, the encoder, and the helper, as shown in Fig. \ref{fig:Chap_ENDUSER_eve}, \ref{fig:Chap_ENDUSER_enc}, and
\ref{fig:Chap_ENDUSER_help}. Since the goal of end-user privacy is to protect the reconstruction sequence generated at the decoder against any unwanted inferences, we allow the decoder mapping to be a random mapping. It can be shown by an example that a stochastic decoder can enlarge the rate-distortion-equivocation region as compared to the one derived for deterministic decoders.\footnote{Although the use of a stochastic \emph{encoder} might also help especially if the decoder is deterministic, we restrict ourself to the deterministic encoder here.
Conservatively, it might be reasonable to assume that only the end-user is willing to implement a new coding scheme (stochastic decoder) to improve his/her privacy.
For the case of memoryless reconstruction in Fig. \ref{fig:Chap_ENDUSER_eve}, it can be shown that allowing the use of a stochastic encoder does not improve the rate-distortion-equivocation region.}

\subsection{Overview of Problem Settings and Organization}
We study an implication of the end-user privacy constraint on the rate-distortion tradeoff where the privacy constraint is imposed at different nodes in the system. A summary of contribution is given below.
\begin{itemize}
\item Section \ref{sec:Chap_ENDUSER_eve} considers end-user privacy at the eavesdropper, as depicted in Fig. \ref{fig:Chap_ENDUSER_eve}. It corresponds to a scenario where there is an eavesdropper observing the source description and its side information, and we wish to prevent it from inferring the final reconstruction. We give inner and outer bounds to the rate-distortion-equivocation region for the cases where the side information is available non-causally and causally at the decoder. In a special case of causal side information where the decoder has no \emph{memory}, that is, each reconstruction symbol depends only on the source description and current side information symbol, the complete characterization of the rate-distortion-equivocation region is given. A binary example illustrating the potential gain from allowing the use of a stochastic decoder is also given in Section \ref{sec:Chap_ENDUSER_example}.

 We note that the case of end-user privacy at the encoder in Fig. \ref{fig:Chap_ENDUSER_enc} is included Fig. \ref{fig:Chap_ENDUSER_eve} when $Z^n=X^n$ since the encoder is a deterministic encoder.
 The results can therefore be obtained straightforwardly from those of the setting in Fig. \ref{fig:Chap_ENDUSER_eve}.
\item Section \ref{sec:Chap_ENDUSER_help} considers end-user privacy at the helper, as shown in Fig. \ref{fig:Chap_ENDUSER_help}. It corresponds to a scenario where we wish to prevent the helper from inferring the final reconstruction. Inner and outer bounds to the rate-distortion-equivocation region are given.
\end{itemize}

\begin{figure}[t]
    \centering
    \psfrag{x}[][]{\small{$X^{n}$}}
    \psfrag{y}[][]{\small{$Y^{n}$}}
    \psfrag{z}[][]{\small{$Z^{n}$}}
    \psfrag{w}[][]{\small{$W$}}
    \psfrag{enc}[][]{\small{Encoder}}
    \psfrag{h}[][]{\small{$\frac{1}{n}H(\hat{X}^n|W,Z^n) \geq \triangle$}}
    \psfrag{dec}[][]{\small{Decoder}}
    \psfrag{eve}[][]{\small{Eve}}
    \psfrag{xhat}[][]{\small{$\hat{X}^{n}$}}
    \psfrag{d}[][]{\small{$Ed(F^{(n)}(X^{n},Y^{n}),\hat{X}^{n}) \leq D$}}
    \includegraphics[width=0.65\textwidth]{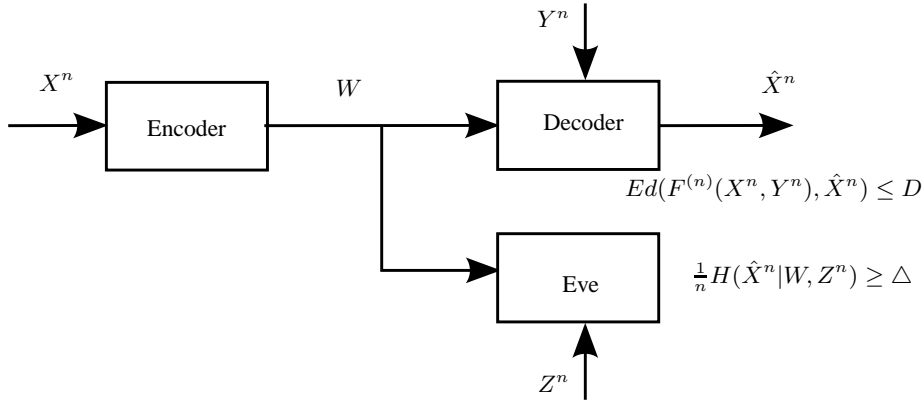}
    \caption{Source coding with end-user privacy at eavesdropper.}\label{fig:Chap_ENDUSER_eve}
\end{figure}
\begin{figure}[t]
    \centering
    \psfrag{x}[][]{\small{$X^{n}$}}
    \psfrag{y}[][]{\small{$Y^{n}$}}
    \psfrag{w}[][]{\small{$W$}}
    \psfrag{enc}[][]{\small{Encoder}}
    \psfrag{h}[][]{\small{$\frac{1}{n}H(\hat{X}^n|X^n) \geq \triangle$}}
    \psfrag{dec}[][]{\small{Decoder}}
    \psfrag{xhat}[][]{\small{$\hat{X}^{n}$}}
    \psfrag{d}[][]{\small{$Ed(F^{(n)}(X^{n},Y^{n}),\hat{X}^{n}) \leq D$}}
    \includegraphics[width=0.65\textwidth]{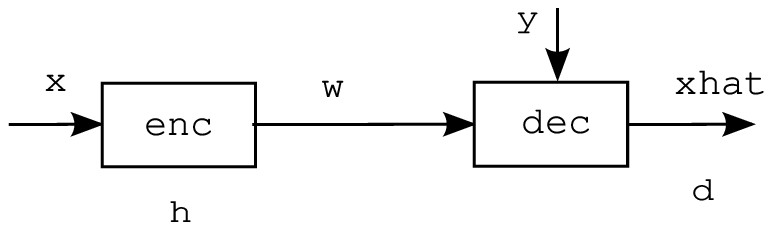}
    \caption{Source coding with end-user privacy at encoder.}\label{fig:Chap_ENDUSER_enc}
\end{figure}
\begin{figure}[t]
    \centering
    \psfrag{x}[][]{\small{$X^{n}$}}
    \psfrag{y}[][]{\small{$Y^{n}$}}
    \psfrag{h}[][]{\small{$\frac{1}{n}H(\hat{X}^n|Y^n) \geq \triangle$}}
    \psfrag{w}[][]{\small{$W$}}
    \psfrag{enc}[][]{\small{Encoder}}
    \psfrag{dec}[][]{\small{Decoder}}
    \psfrag{xhat}[][]{\small{$\hat{X}^{n}$}}
    \psfrag{d}[][]{\small{$Ed(F^{(n)}(X^{n},Y^{n}),\hat{X}^{n}) \leq D$}}
    \includegraphics[width=0.65\textwidth]{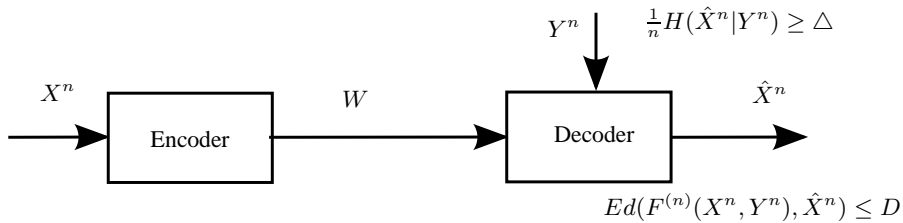}
    \caption{Source coding with end-user privacy at helper.}\label{fig:Chap_ENDUSER_help}
\end{figure}
\subsection{Related Work}
The idea of protecting the reconstruction sequence against an eavesdropper was first considered as an additional secrecy constraint in the context of coding for watermarking and encryption by Merhav in \cite{Merhav_a} where the author considered a watermarking setting using a secret key sequence to protect the (watermark) message and reconstruction sequences. It was also considered in a related Shannon cipher system where the secret key is distributed through a capacity-limited channel in \cite{Merhav_b}. Recently, Schieler and Cuff in \cite{Schieler} considered a lossy source coding setting with common secret key and the objective is to maximize a payoff function based on the source, legitimate's and eavesdropper's reconstruction sequences. Under certain assumptions, the payoff function can reduce to the equivocation of the reconstruction sequence. Although it was discussed in \cite{EkremUlukus} that the end-user privacy constraint which is the equivocation bound of the reconstruction sequence might be an inconsistent measure of the \emph{source secrecy}, in our work, it is still a reasonable measure from an end-user's secrecy point of view as it measures amount of the remaining uncertainty of the reconstruction sequence at a particular terminal.  Closely related to the end-user privacy, Tandon et. al in \cite{TandonSankarPoor} considered the setting of Heegard-Berger lossy source coding \cite{HeegardBerger} where the degraded decoder has an additional privacy constraint on the side information of the stronger decoder.
With the focus on source secrecy, secure lossless distributed source coding was studied by Prabhakaran and Ramchandran \cite{PrabhakaranRamchandran}, G\"{u}nd\"{u}z et al.  \cite{GunduzErkipPoor}, and Tandon et al. \cite{TandonUlukusRamchandran}. Villard and Piantanida in \cite{VillardPiantanida} considered the extension to the lossy setting and characterized the optimal tradeoff between rate, distortion, and equivocation rate of the source for some special cases. Notations used in the paper follow standard ones in \cite{ElGamalKim}.

\section{End-user Privacy at Eavesdropper}\label{sec:Chap_ENDUSER_eve}

\subsection{Problem Formulation}
We consider a setting in the presence of an external eavesdropper, as shown in Fig. \ref{fig:Chap_ENDUSER_eve}. Source, side information, and reconstruction alphabets, $\mathcal{X}, \mathcal{Y}, \mathcal{Z}, \hat{\mathcal{X}}$ are assumed to be finite. Let $(X^{n},Y^n,Z^n)$ be $n$-length sequences which are i.i.d. according to $P_{X,Y,Z}$. A function $F^{(n)}(X^n,Y^n)$ is assumed to be a component-wise function, where the $i^{\text{th}}$ component $F^{(n)}_i(X^n,Y^n)=F(X_i,Y_i)$ with $F:\mathcal{X} \times \mathcal{Y} \rightarrow \mathcal{F}$, for $i=1,\ldots,n$ (cf., e.g., \cite{Yamamoto}). Given a source sequence $X^n$, an encoder generates a source description $W \in \mathcal{W}^{(n)}$ and sends it over the noise-free, rate-limited link to a decoder. Given the source description and the side information $Y^n$, the decoder randomly generates $\hat{X}^n$ as an estimate of the value of the function $F^{(n)}(X^n,Y^n)$ such that it satisfies a distortion criterion. The eavesdropper has access to the source description and its own side information $Z^n$. The end-user privacy at the eavesdropper is then measured by the normalized conditional entropy $H(\hat{X}^n|W,Z^n)/n$. We are interested in characterizing the optimal tradeoff between rate, distortion, and equivocation of the reconstruction sequence in terms of the rate-distortion-equivocation region.

The model in Fig. \ref{fig:Chap_ENDUSER_eve} is similar to the secure source coding with side information in \cite{VillardPiantanida}, except that the end-user privacy is imposed instead of the source privacy. The setting is also closely related to the model of side information privacy studied in \cite{TandonSankarPoor} where the authors are interested in the privacy of side information at the second decoder who is also required to decode the source subject to a distortion constraint.
As for the end-user privacy, \cite{Merhav_a} considered a similar constraint in the context of coding for watermarking and encryption. The main differences to our setting are that the author considered the case where there exists a common secret key sequence independent of the message sequence at both encoder and decoder, and that the use of a stochastic decoder was not considered. From the problem formulation point of view, the end-user privacy constraint can also be considered as a complement to the \emph{common reconstruction constraint} in lossy source coding problems \cite{Steinberg},\cite{Lapidoth} where the reconstruction sequence is instead required to be reproduced at the sender.

Definitions of code, achievability, and the rate-distortion-equivocation region are given below.
\begin{definition}
A $(|\mathcal{W}^{(n)}|,n)$-code for source coding with end-user privacy consists of
\begin{itemize}
  \item an encoder $f^{(n)}: \mathcal{X}^{n} \rightarrow \mathcal{W}^{(n)}$,
  \item a stochastic decoder $G^{(n)}$ which maps $w \in \mathcal{W}^{(n)}$ and $y^n \in \mathcal{Y}^{n}$ to $\hat{x}^n \in \mathcal{\hat{X}}^{n}$ according to $p(\hat{x}^n|w,y^n)$,
\end{itemize}
where  $\mathcal{W}^{(n)}$ is a finite set.
\end{definition}
Let $d: \mathcal{F} \times \hat{\mathcal{X}} \rightarrow [0,\infty)$
  be the single-letter distortion measure\footnote{Note that here $\hat{\mathcal{X}}$ does not denote an alphabet of the reconstruction of $X$, but of the outcome of the function $F(X,Y)$.}. The distortion
  between the value of the function of source sequence and side information and its estimate at the
  decoder is defined as
\begin{align*}
     &d^{(n)}(F^{(n)}(X^n,Y^n),\hat{X}^{n}) \triangleq \frac{1}{n}\sum_{i=1}^{n}d(F(X_i,Y_i),\hat{X}_{i}),
\end{align*}
where $d^{(n)}(\cdot)$ is the distortion function.

\begin{definition}  A rate-distortion-equivocation tuple $(R,D,\triangle) \in \mathbb{R}^{3}_{+}$ is said to be \emph{achievable} if for any $\delta>0$ and all sufficiently large $n$ there exists a $(|\mathcal{W}^{(n)}|,n)$ code such that
\[ \frac{1}{n}\log\big|\mathcal{W}^{(n)}\big|\leq R+\delta, \]
\[ E[d^{(n)}(F^{(n)}(X^n,Y^n), \hat{X}^{n})] \leq D+\delta, \]
\[ \text{and} \quad  \frac{1}{n}H(\hat{X}^{n}|W,Z^n) \geq \triangle-\delta.\]

The \emph{rate-distortion-equivocation region} $\mathcal{R}_{\text{eve}}$ is the set of all achievable tuples.
\end{definition}

\subsection{Result}
\begin{definition}
Let $\mathcal{R}_{\text{in}}^{\text{(eve)}}$ be the set of all tuples $(R,D,\triangle)\in \mathbb{R}^{3}_{+}$ such that
\begin{align}
R &\geq I(X;U|Y) \\
D &\geq E[d(F(X,Y),\hat{X})] \\
\triangle 
& \leq H(\hat{X}|U,Y) + I(\hat{X};Y|T)-I(\hat{X};Z|T) -I(U;Z|T,Y,\hat{X}),
\end{align}
for some joint distributions of form $P_{X,Y,Z}(x,y,z)P_{U|X}(u|x)P_{T|U}(t|u)P_{\hat{X}|U,Y}(\hat{x}|u,y)$ with  $|\mathcal{T}| \leq |\mathcal{X}| +5, |\mathcal{U}| \leq (|\mathcal{X}|+5)(|\mathcal{X}|+4)$.

In addition, let $\mathcal{R}_{\text{out}}^{\text{(eve)}}$ be the same set as $\mathcal{R}_{\text{in}}^{\text{(eve)}}$ except that the equivocation bound is replaced by
\begin{equation}
  \triangle \leq H(\hat{X}|U,Y) + I(V,\hat{X};Y|T) -I(V,\hat{X};Z|T),
\end{equation}
for some joint distributions $P_{X,Y,Z}(x,y,z)P_{U|X}(u|x)P_{T|U}(t|u)P_{V,\hat{X}|U,Y}(v,\hat{x}|u,y)$ where $H(T|V)=H(T|U)=0$.
\end{definition}

\begin{proposition}[Inner and outer bounds]\label{proposition:Chap_ENDUSER_1}
The rate-distortion-equivocation region $\mathcal{R}_{\text{eve}}$ for the problem in Fig. \ref{fig:Chap_ENDUSER_eve} satisfies $\mathcal{R}_{\text{in}}^{\text{(eve)}} \subseteq \mathcal{R}_{\text{eve}} \subseteq \mathcal{R}_{\text{out}}^{\text{(eve)}}$.
\end{proposition}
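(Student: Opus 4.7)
The plan is to separate the three constraints: the rate and distortion bounds follow by adapting the standard Wyner--Ziv scheme together with a memoryless stochastic reconstruction channel, so the substantive work lies in the equivocation bound. For achievability I would use a superposition random code with cloud-center codebook $\{T^n(m_0)\}$ and conditional satellite codebook $\{U^n(m_0,m_1)\}$ drawn under the factorization $P_X P_{U|X} P_{T|U}$, partition the $U^n$ codewords into Wyner--Ziv bins so that the legitimate decoder, aided by $Y^n$, recovers $(T^n,U^n)$ from the bin index $W$, and then emit $\hat X^n$ symbolwise through $P_{\hat X|U,Y}$ at the decoder. Standard covering--packing bounds yield the rate $I(X;U|Y)$, while the conditional typical-average lemma delivers the distortion bound.

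For the inner equivocation I would first insert $T^n$, using that $T^n$ can be reliably recovered from $(W,Z^n)$ under an appropriate rate condition on the $T$-layer (or by treating the $T$-layer as open information), so that $H(\hat X^n|W,Z^n)\geq H(\hat X^n|W,Z^n,T^n)-n\epsilon_n$. The chain rule then gives
\begin{align*}
H(\hat X^n|W,Z^n,T^n)\geq H(\hat X^n|U^n,Y^n,T^n)+H(U^n|W,Z^n,T^n)-H(U^n|W,Z^n,T^n,\hat X^n,Y^n),
\end{align*}
and single-letterization via conditional typicality produces $nH(\hat X|U,Y)$ for the first term, $n[I(U;Y|T)-I(U;Z|T)]$ (the wiretap-binning gap, rewritable using $I(\hat X;Y|T)-I(\hat X;Z|T)$ together with the Markov chain) for the second, and, via a Fano-type list-decoding of $U^n$ from $(W,T^n,Z^n,\hat X^n,Y^n)$, a correction of at most $nI(U;Z|T,Y,\hat X)$ for the third. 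Collecting the pieces reproduces the claimed inner bound on $\triangle$.

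For the converse I would take $T_i\triangleq (Y^{i-1},Z_{i+1}^n)$ and $U_i\triangleq (W,T_i)$, time-share with $Q$ uniform on $[n]$, and apply the Csisz\'ar sum identity to both the Wyner--Ziv rate step and the wiretap-style manipulation of $H(\hat X^n|W,Z^n)$. The outer auxiliary $V_i$ is then identified as the enlarged past $(\hat X^{i-1},W,Y^{i-1},Z_{i+1}^n)$, which by construction determines both $T_i$ and $U_i$, giving $H(T|V)=H(T|U)=0$. The Markov chains $X_i\to U_i\to(T_i,Y_i,Z_i)$ and $\hat X_i\to(U_i,Y_i)\to(T_i,Z_i,X_i)$ follow from the i.i.d.\ source and the form of the stochastic decoder, while the cardinality bounds come from the standard support-lemma/convex-cover argument.

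The main obstacle, and the reason the proposition states only inner and outer bounds, is precisely the mismatch between the inner correction $-I(U;Z|T,Y,\hat X)$ and the outer free auxiliary $V$: in the achievability one must pay a list-decoding rate for $U^n$ given the eavesdropper's augmented view, whereas the converse effectively hands the putative eavesdropper all such side information for free. Orchestrating the Csisz\'ar manipulations so that the identified auxiliaries satisfy the required Markov constraints, and matching the list-decoding rate to the single-letter quantity $I(U;Z|T,Y,\hat X)$ in the achievability, are the delicate steps; routine covering--packing and distortion estimates follow the Wyner--Ziv template and can be quoted rather than rederived.
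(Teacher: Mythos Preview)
Your converse identification has a genuine gap. With $U_i=(W,T_i)=(W,Y^{i-1},Z_{i+1}^n)$ you get $(U_i,Y_i)=(W,Y^{i},Z_{i+1}^n)$, but the decoder here is \emph{non-causal}: $\hat X_i$ is produced from $(W,Y^n)$ and therefore depends on $Y_{i+1}^n$. Since $W=f(X^n)$, conditioning on $W$ leaves $X_i$ and $X_{i+1}^n$ (hence $Y_{i+1}^n$) correlated, so the Markov chain $\hat X_i-(U_i,Y_i)-(X_i,Z_i)$ you assert does \emph{not} hold. The paper repairs this by placing the \emph{entire} block $Y^{n\setminus i}$ inside $U_i$, taking $U_i=(W,Z^{i-1},Y^{n\setminus i})$ so that $(U_i,Y_i)\supseteq(W,Y^n)$ and the Markov relation follows at once from $\hat X^n-(W,Y^n)-(X^n,Z^n)$. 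Correspondingly the paper sets $T_i=(W,Z^{i-1},Y_{i+1}^n)$ (so $W$ already sits in $T$) and $V_i=(W,Z^{i-1},Y_{i+1}^n,\hat X^{n\setminus i})$; your $V_i$ carrying only $\hat X^{i-1}$ would not absorb the terms $I(Y_i;\hat X^{n\setminus i}|\cdot)$ and $I(Z_i;\hat X^{n\setminus i}|\cdot)$ that the Csisz\'ar--sum step actually produces when expanding $I(\hat X^n;Y^n|W)-I(\hat X^n;Z^n|W)$.

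On the achievability side your route also diverges from the paper and, as sketched, does not close. Lower-bounding $H(\hat X^n|W,Z^n)$ by $H(\hat X^n|W,Z^n,T^n)$ is free, but the displayed ``chain rule'' inequality is not: from
\[
H(\hat X^n|W,Z^n,T^n)=H(U^n|W,Z^n,T^n)+H(\hat X^n|U^n,W,Z^n,T^n)-H(U^n|\hat X^n,W,Z^n,T^n)
\]
you cannot replace the middle term by $H(\hat X^n|U^n,Y^n,T^n)$ (that goes the wrong way), nor drop $Y^n$ in the last term in the direction you need. In addition, $I(U;Y|T)-I(U;Z|T)$ is not in general equal to $I(\hat X;Y|T)-I(\hat X;Z|T)$ under the stated Markov chains, so the ``rewritable'' step requires an explicit identity you have not supplied. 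The paper avoids all of this: it expands $H(\hat X^n|W_1,W_2,Z^n)$ by inserting $(J,K,Y^n)$, invokes Fano's inequality for $(J,K)$ given $(W_1,W_2,Y^n)$ at the \emph{legitimate} decoder, and then bounds the pieces $H(X^n|J,K,Y^n,Z^n)$, $H(Z^n|J)$, $H(Y^n|J,Z^n,\hat X^n)$ and $H(\hat X^n|U^n,Y^n)$ directly via typicality lemmas---never assuming the eavesdropper decodes $T^n$ and never appealing to a wiretap-style list size for $U^n$.
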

\begin{proof}
 The proof is given in Appendix \ref{appendix:Chap_ENDUSER_proposition1}. The achievable scheme is based on layered coding and Wyner-Ziv binning in which the former aims to provide some degree of freedom to adapt amount of information accessible to the eavesdropper by utilizing two layers of codewords $T^n$ and $U^n$, and the latter is used to reduce the rate needed for transmission. In addition, we allow for a stochastic decoder where the final reconstruction sequence is generated randomly based on the selected codeword $U^n$ and the side information $Y^n$.
\end{proof}

In the equivocation bound of $\mathcal{R}_{\text{in}}^{\text{(eve)}}$, the first term corresponds to uncertainty of $\hat{X}^n$ due to the use of a stochastic decoder. The difference $I(\hat{X};Y|T)-I(\hat{X};Z|T)$ can be considered as an additional uncertainty due to the fact that the eavesdropper observes $Z^n$, but not $Y^n$ which is used for generating $\hat{X}^n$.
The last mutual information term is related to the leakage of the second layer codeword $U^n$. However, the fact that it is not clear to interpret might be an indication that the bound is not optimal.
From the proof of the outer bound $\mathcal{R}_{\text{out}}^{\text{(eve)}}$, random variable $V$ is related to certain reconstruction symbols and it appears since the reconstruction symbol
depends on the source description and the \emph{whole} side information $Y^n$ (see, e.g., \eqref{eq:random_variable_V} where we cannot simplify further the terms with $\hat{X}^n$ in the conditioning.).

\begin{remark}
We can relate our result to those of other settings where the function $F^{(n)}(X^n,Y^n)=X^n$. For example, the inner bound  $\mathcal{R}_{\text{in}}^{\text{(eve)}}$ can resemble the optimal result of the secure lossless source coding problem considered in \cite{VillardPiantanida}. To obtain the rate-equivocation region, we set $\hat{X}=U=X$ in $\mathcal{R}_{\text{in}}^{\text{(eve)}}$.
\end{remark}

\subsection{Causal Side Information}
Next, we consider the variant of the problem in Fig. \ref{fig:Chap_ENDUSER_eve} where the side information $Y^n$ is available only causally at the decoder. This could be relevant in delay-constrained applications as mentioned in \cite{WeissmanElGamal} and references therein. We consider the following types of reconstructions.
\begin{itemize}
\item \emph{Causal reconstruction}: $\hat{X}_i \sim p(\hat{x}_i|w,y^i,\hat{x}^{i-1})$ for $i=1,\ldots,n$.
\item \emph{Memoryless reconstruction}: $\hat{X}_i \sim p(\hat{x}_i|w,y_i)$ for $i=1,\ldots,n$.
\end{itemize}

\begin{definition}
Let $\mathcal{R}_{\text{in}}^{(\text{eve,causal})}$ be the set of all tuples $(R,D,\triangle)\in \mathbb{R}^{3}_{+}$ such that
\begin{align}
R &\geq I(X;U) \\
D &\geq E[d(F(X,Y),\hat{X})] \\
\triangle &\leq H(\hat{X}|U,Z),
\end{align}
for some joint distributions of the form $P_{X,Y,Z}(x,y,z)P_{U|X}(u|x)P_{\hat{X}|U,Y}(\hat{x}|u,y)$ with $|\mathcal{U}| \leq |\mathcal{X}| +3$.

In addition, let $\mathcal{R}_{\text{out}}^{(\text{eve,causal})}$ be the same set as $\mathcal{R}_{\text{in}}^{(\text{eve,causal})}$ except that the equivocation bound is replaced by
\begin{equation}
  \triangle \leq H(\hat{X}|T,Z),
\end{equation}
for some joint distributions $P_{X,Y,Z}(x,y,z)P_{U|X}(u|x)P_{T|U}(t|u)P_{\hat{X}|U,Y}(\hat{x}|u,y)$ where $H(T|U)=0$.
\end{definition}

\subsubsection{Causal Reconstruction}
\begin{proposition}[Inner and outer bounds]\label{proposition:Chap_ENDUSER_2}
The rate-distortion-equivocation region $\mathcal{R}_{\text{eve}}$ for the problem in Fig. \ref{fig:Chap_ENDUSER_eve} with \emph{causal reconstruction} satisfies the relation $\mathcal{R}_{\text{in}}^{(\text{eve,causal})} \subseteq \mathcal{R}_{\text{eve}} \subseteq \mathcal{R}_{\text{out}}^{(\text{eve,causal})}$.
\end{proposition}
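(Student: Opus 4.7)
\medskip
\noindent\textbf{Proof sketch for Proposition \ref{proposition:Chap_ENDUSER_2}.}

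\emph{Inner bound (achievability).} My plan is to use a conventional rate--distortion code without binning, since Wyner--Ziv binning offers no rate reduction when the side information is only available causally. First I would generate $2^{n(I(X;U)+\epsilon)}$ i.i.d.\ codewords $U^{n}(m)\sim\prod_{i}P_{U}(u_{i})$; the encoder picks an index $m$ with $(X^{n},U^{n}(m))$ jointly typical and transmits $W=m$. The decoder then emits each $\hat{X}_{i}$ independently through the stochastic kernel $P_{\hat{X}|U,Y}(\cdot\mid U_{i}(m),Y_{i})$, which is a memoryless and hence valid causal reconstruction. The rate and distortion analyses are the standard joint-typicality calculations. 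For the equivocation, I would note that $W\mapsto U^{n}(W)$ is deterministic within the codebook, and then use the conditional-typicality lemma to argue that the joint distribution of $(X^{n},Y^{n},Z^{n},U^{n},\hat{X}^{n})$ is close to the i.i.d.\ product of $P_{X,Y,Z}P_{U|X}P_{\hat{X}|U,Y}$, so that conditionally on $(U^{n},Z^{n})$ the symbols $\hat{X}_{i}$ are independent across $i$ and each depends only on $(U_{i},Z_{i})$; this yields $\frac{1}{n}H(\hat{X}^{n}|W,Z^{n})\to H(\hat{X}|U,Z)$ as desired.

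\emph{Outer bound (converse).} For any valid code I would set $U_{i}\triangleq(W,Y^{i-1})$ and $T_{i}\triangleq W$, noting that $T_{i}$ is trivially a function of $U_{i}$, so $H(T|U)=0$ once we pass to single-letter quantities. The Markov chain $U_{i}-X_{i}-Y_{i}$ follows because $(X^{n},Y^{n})$ is i.i.d.\ and $W$ is a function of $X^{n}$. For the rate I would use $nR\geq H(W)\geq I(W;X^{n},Y^{n})$, expand by the chain rule, drop the $Y_{i}$-terms via the channel Markov chain $Y_{i}-X_{i}-(W,X^{-i},Y^{i-1})$, and exploit the i.i.d.\ property of $X^{n}$ to reach $\sum_{i}I(W,Y^{i-1};X_{i})=\sum_{i}I(U_{i};X_{i})$. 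For the distortion, recursive substitution $\hat{X}_{i}=g_{i}(W,Y^{i},\hat{X}^{i-1})$ shows $\hat{X}_{i}$ is a function of $(W,Y^{i})=(U_{i},Y_{i})$, so the per-letter distortion single-letterises. For the equivocation, the key step is simply dropping conditioning:
\[
H(\hat{X}^{n}|W,Z^{n})=\sum_{i}H(\hat{X}_{i}|\hat{X}^{i-1},W,Z^{n})\leq\sum_{i}H(\hat{X}_{i}|W,Z_{i}).
\]
A standard time-sharing variable $Q\sim\mathrm{Unif}\{1,\ldots,n\}$, together with $U=(U_{Q},Q)$, $T=(W,Q)$, $X=X_{Q}$, $Y=Y_{Q}$, $Z=Z_{Q}$, $\hat{X}=\hat{X}_{Q}$, converts these per-$i$ sums into the claimed single-letter bounds, with $T$ still a function of $U$. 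The cardinality bound $|\mathcal{U}|\leq|\mathcal{X}|+3$ is then obtained by a routine Fenchel--Eggleston--Carath\'eodory argument preserving $P_{X}$ and the three functional constraints (rate, distortion, equivocation).

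\emph{Where I expect the work to concentrate.} The main obstacle is the equivocation analysis in the achievability: I must rigorously justify that, conditioned on $(U^{n},Z^{n})$, the reconstructions $\hat{X}^{n}$ factorise across $i$ and that the conditioning on each coordinate collapses to $(U_{i},Z_{i})$, despite $U^{n}$ being the output of a typicality encoder rather than an i.i.d.\ draw from $P_{U|X}$. The residual gap between the inner bound ($H(\hat{X}|U,Z)$) and the outer bound ($H(\hat{X}|T,Z)$ with $T$ only required to be a function of $U$) reflects the difficulty of capturing, in the converse, the decoder's ability to combine a codeword with past causal side information; this is exactly why only inner and outer bounds, rather than a single-letter characterisation, are stated.
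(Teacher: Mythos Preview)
Your achievability scheme and your converse skeleton are both essentially the paper's, but two points deserve comment.

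\emph{Converse auxiliary variables.} You take $U_{i}=(W,Y^{i-1})$ and $T_{i}=W$, whereas the paper takes $U_{i}=(W,Y^{i-1},\hat{X}^{i-1})$ and $T_{i}=(W,\hat{X}^{i-1})$. Both choices satisfy the required Markov chains $U_{i}-X_{i}-(Y_{i},Z_{i})$, $\hat{X}_{i}-(U_{i},Y_{i})-(X_{i},Z_{i})$, and $H(T_{i}|U_{i})=0$, so both establish containment in $\mathcal{R}_{\text{out}}^{(\text{eve,causal})}$. The paper's choice keeps $\hat{X}^{i-1}$ in the equivocation conditioning and writes $H(\hat{X}^{n}|W,Z^{n})=\sum_{i}H(\hat{X}_{i}|W,\hat{X}^{i-1},Z^{n})\leq\sum_{i}H(\hat{X}_{i}|T_{i},Z_{i})$; your version drops $\hat{X}^{i-1}$ as well. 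Either way the same outer bound region is proved. One small slip: you write ``$\hat{X}_{i}$ is a function of $(W,Y^{i})$'', but the decoder is stochastic; what you actually need (and what holds) is the Markov chain $\hat{X}_{i}-(W,Y^{i})-(X_{i},Z_{i})$.

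\emph{Achievability equivocation.} Your proposed justification---that the joint law of $(X^{n},Y^{n},Z^{n},U^{n},\hat{X}^{n})$ is close to i.i.d., ``so that conditionally on $(U^{n},Z^{n})$ the symbols $\hat{X}_{i}$ are independent across $i$''---is not correct as stated: since $U^{n}$ is a codeword rather than an i.i.d.\ draw, the conditional law $p(y^{n}\mid u^{n},z^{n})$ is not product-form, and hence the $\hat{X}_{i}$ are not conditionally independent given $(U^{n},Z^{n})$. The paper does not argue via approximate product structure at all; instead it expands
\[
H(\hat{X}^{n}\mid W,Z^{n},\mathcal{C}_{n})=H(\hat{X}^{n}\mid W,Y^{n},Z^{n},\mathcal{C}_{n})+I(\hat{X}^{n};Y^{n}\mid W,Z^{n},\mathcal{C}_{n}),
\]
uses the Markov chain $\hat{X}^{n}-(U^{n},Y^{n})-(W,Z^{n})$ to replace the first term by $H(\hat{X}^{n}\mid U^{n},Y^{n})$, and then bounds each remaining entropy term (such as $H(\hat{X}^{n}\mid U^{n},Y^{n})$, $H(X^{n}\mid W,Y^{n},Z^{n},\mathcal{C}_{n})$, $H(Z^{n}\mid W,\mathcal{C}_{n})$, $H(Y^{n}\mid W,Z^{n},\hat{X}^{n},\mathcal{C}_{n})$) by its single-letter counterpart using typical-set cardinality arguments in the style of Lemmas~\ref{lemma:Chap_ENDUSER_1} and~\ref{lemma:Chap_ENDUSER_2}. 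After collecting terms and applying the Markov chains $U-X-(Y,Z)$ and $\hat{X}-(U,Y)-(X,Z)$, everything collapses to $H(\hat{X}\mid U,Z)$. If you want to salvage your route, you would need a genuine total-variation bound (an output-statistics/resolvability argument, not just the conditional-typicality lemma) together with continuity of entropy on finite alphabets; the paper's entropy-manipulation route is more direct.
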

\begin{proof}
Since the side information is only available causally at the decoder, it cannot be used for binning to reduce the rate. The achievable scheme follows that of source coding with causal side information \cite{WeissmanElGamal} with the additional use of a stochastic decoder. The proof is given in Appendix \ref{appendix:Chap_ENDUSER_proposition2}.
\end{proof}

The entropy term in the equivocation bound of $\mathcal{R}_{\text{in}}^{(\text{eve,causal})}$ corresponds to uncertainty of the reconstruction sequence given that the eavesdropper can decode the codeword $U^n$ and has access to the side information $Z^n$.

\subsubsection{Memoryless Reconstruction}
\begin{proposition}[Rate-distortion-equivocation region]\label{proposition:Chap_ENDUSER_3}
The rate-distortion-equivocation region $\mathcal{R}_{\text{eve}}$ for the problem in Fig. \ref{fig:Chap_ENDUSER_eve} with \emph{memoryless reconstruction} is given by $\mathcal{R}_{\text{in}}^{(\text{eve,causal})}$, i.e., $\mathcal{R}_{\text{eve}}= \mathcal{R}_{\text{in}}^{(\text{eve,causal})}$.
\end{proposition}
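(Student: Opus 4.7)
The plan is to inherit achievability directly from Proposition~\ref{proposition:Chap_ENDUSER_2} and supply a matching converse tailored to the memoryless reconstruction structure. The inclusion $\mathcal{R}_{\text{in}}^{(\text{eve,causal})} \subseteq \mathcal{R}_{\text{eve}}$ is immediate from Proposition~\ref{proposition:Chap_ENDUSER_2}: the achievable scheme there applies the single-letter stochastic map $p(\hat{x}|u,y)$ to $(U_i, Y_i)$ symbol by symbol, and since the codebook makes $U_i$ a deterministic function of the transmitted index $W$, the decoding rule depends only on $(W, Y_i)$ together with independent local randomness, i.e., it is already a valid memoryless decoder. Only the converse $\mathcal{R}_{\text{eve}} \subseteq \mathcal{R}_{\text{in}}^{(\text{eve,causal})}$ therefore requires work.

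For the converse, the key simplification is that the constraint $\hat{X}_i \sim p(\hat{x}_i|W,Y_i)$ allows the simplest possible identification $U_i \triangleq W$ for every $i$. The rate bound is
\[
nR \geq H(W) \geq I(X^n;W) = \sum_{i=1}^n I(X_i;W \mid X^{i-1}) \geq \sum_{i=1}^n I(X_i;W),
\]
using $X_i \perp X^{i-1}$ in the last step. The equivocation bound becomes
\[
n(\triangle-\delta) \leq H(\hat{X}^n|W,Z^n) = \sum_i H(\hat{X}_i|\hat{X}^{i-1},W,Z^n) \leq \sum_i H(\hat{X}_i|W,Z_i),
\]
simply by dropping $\hat{X}^{i-1}$ and $Z_j$ for $j\neq i$ from the conditioning. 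The distortion bound $\frac{1}{n}\sum_i E[d(F(X_i,Y_i),\hat{X}_i)] \leq D+\delta$ is routine.

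To single-letterize I introduce a time-sharing variable $Q$ uniform on $\{1,\dots,n\}$ and independent of $(X^n,Y^n,Z^n)$, and set $X=X_Q$, $Y=Y_Q$, $Z=Z_Q$, $\hat{X}=\hat{X}_Q$, $U=(W,Q)$, so that the three averaged inequalities collapse into the defining relations of $\mathcal{R}_{\text{in}}^{(\text{eve,causal})}$. The Markov relation $U-X-(Y,Z)$ follows from $W=f(X^n)$ and the i.i.d.\ nature of the source triple: conditioned on $X_i$, $W$ is a function of the remaining source symbols, which are independent of $(Y_i,Z_i)$ given $X_i$; the relation $\hat{X}-(U,Y)-(X,Z)$ is just the definition of memoryless reconstruction. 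The cardinality bound $|\mathcal{U}|\leq|\mathcal{X}|+3$ is obtained from the standard Fenchel--Eggleston--Carath\'eodory support-lemma argument preserving $P_X$, $E[d(F(X,Y),\hat{X})]$, and $H(\hat{X}|U,Z)$. The main obstacle is largely bookkeeping rather than any new technique; the essential content is recognizing that the memoryless constraint decouples $\hat{X}_i$ from the full side-information history, which is exactly what lets the trivial auxiliary $U_i=W$ replace the richer choices required in Proposition~\ref{proposition:Chap_ENDUSER_2} and tightens the outer bound from $H(\hat{X}|T,Z)$ to $H(\hat{X}|U,Z)$.
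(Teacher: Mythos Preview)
Your proof is correct and essentially identical to the paper's: the same identification $U_i=W$, the same chain-rule bounds on rate and equivocation, and the same time-sharing wrap-up. One minor bookkeeping slip: your support-lemma list omits $H(X|U)$ (needed for the rate constraint $I(X;U)$) and the preservation of the Markov relation $\hat{X}-(U,Y)-(X,Z)$; including these four functionals beyond $P_X$ is precisely what yields the $|\mathcal{X}|+3$ bound you claimed, matching the paper.
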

\begin{proof}
The achievability proof follows the same as in the case of causal reconstruction. As for the converse proof, let $U_i \triangleq W$ which satisfies $U_i-X_i-(Y_i,Z_i)$ and   $\hat{X}_i-(U_i,Y_i)-(X_i,Z_i)$ for all $i=1,\ldots,n$. It then follows that
\begin{align*}
n(R + \delta_n) &\geq H(W) \geq I(X^n;W)\\
&= \sum_{i=1}^n H(X_i) -H(X_i|W,X^{i-1})\\
&\geq \sum_{i=1}^n I(X_i;U_i),
\end{align*}
\begin{align*}
D + \delta_n &\geq E[d^{(n)}(F^{(n)}(X^n,Y^n),\hat{X}^n)] \\
&= \frac{1}{n}\sum_{i=1}^n E[d(F(X_i,Y_i),\hat{X}_i)],
\end{align*}
and
\begin{align*}
n(\triangle - \delta_n) &\leq H(\hat{X}^n|W,Z^n) \\
&\leq \sum_{i=1}^n H(\hat{X}_i|U_i,Z_i).
\end{align*}
The proof ends using the standard time-sharing argument. The cardinality bounds on the sets $\mathcal{U}$ in $R_{\text{in}}^{(\text{eve,causal})}$ can be proved using the support lemma \cite{CsiszarBook} that $\mathcal{U}$ should have $|\mathcal{X}|-1$ elements to preserve $P_{X}$, plus four more for $H(X|U)$, $H(\hat{X}|U,Z)$, $E[d(F(X,Y),\hat{X}]$, and the Markov relation $\hat{X}-(U,Y)-(X,Z)$.
\end{proof}

\begin{remark}\label{remark:Chap_ENDUSER_random_dec_gain}
For the special case where $Y=\varnothing$, the rate-distortion-equivocation region is given by $\mathcal{R}_{\text{in}}^{(\text{eve,causal})}$ with the corresponding set of distributions such that $Y=\varnothing$. We can see that if the decoder is a deterministic mapping, the achievable equivocation rate is zero since the eavesdropper observes everything the decoder does. However, for some positive $D$, by using the stochastic decoder, we can achieve the equivocation rate of $H(\hat{X}|U,Z)$ which can be strictly positive. This shows that there exist cases where stochastic decoder strictly enlarges the rate-distortion-equivocation region.
\end{remark}
\begin{remark}
Proposition \ref{proposition:Chap_ENDUSER_3} resembles the result of the special case in \cite[Corollary 5]{Schieler} where there is no shared secret key.
\end{remark}
\subsection{Special Case: End-user privacy at the encoder}
Fig. \ref{fig:Chap_ENDUSER_eve} includes the setting of end-user privacy at the encoder in Fig. \ref{fig:Chap_ENDUSER_enc} as a special case by setting $Z^n=X^n$ since the source description is a deterministic function of $X^n$.
The above results can readily reduce to the corresponding results for the problems in Fig. \ref{fig:Chap_ENDUSER_enc} as follows.
\begin{itemize}
\item Inner bound: The inner bound for the setting in Fig. \ref{fig:Chap_ENDUSER_enc} is obtained readily from $\mathcal{R}_{\text{in}}^{(\text{eve})}$ by setting $Z=X$ and $T=U$.
\item Inner and outer bounds for \emph{causal reconstruction} are obtained from $\mathcal{R}_{\text{in}}^{(\text{eve,causal})}$ and $\mathcal{R}_{\text{out}}^{(\text{eve,causal})}$ by setting $Z=X$.
\item The rate-distortion-equivocation region for \emph{memoryless reconstruction} is obtained from $\mathcal{R}_{\text{in}}^{(\text{eve,causal})}$ by setting $Z=X$.
\end{itemize}

\section{End-user Privacy at Helper} \label{sec:Chap_ENDUSER_help}

In this section, we consider the setting in Fig. \ref{fig:Chap_ENDUSER_help} where the end-user privacy constraint is imposed at the helper who provides side information $Y^n$ to the decoder. We are interested in how the decoder should utilize the correlated side information in the reconstruction while keeping the reconstruction sequence secret/private from the helper.

\subsection{Problem Formulation}
The problem formulation and definition of the code are similar as before, except that the end-user privacy constraint is now at the helper.

\begin{definition}  A rate-distortion-equivocation tuple $(R,D,\triangle)\in \mathbb{R}^{3}_{+}$ is said to be \emph{achievable} if for any $\delta>0$ and all sufficiently large $n$ there exists a $(|\mathcal{W}^{(n)}|,n)$ code such that
\[ \frac{1}{n}\log\big|\mathcal{W}^{(n)}\big|\leq R+\delta, \]
\[ E[d^{(n)}(F^{(n)}(X^n,Y^n), \hat{X}^n)] \leq D+\delta, \]
\[ \text{and} \quad  \frac{1}{n}H(\hat{X}^n|Y^n) \geq \triangle-\delta.\]

The \emph{rate-distortion-equivocation region} $\mathcal{R}_{\text{helper}}$ is the set of all achievable tuples.
\end{definition}

\subsection{Result}
\begin{definition}
Let $\mathcal{R}_{\text{in}}^{\text{(help)}}$ be the set of all tuples $(R,D,\triangle)\in \mathbb{R}^{3}_{+}$ such that
\begin{align}
R &\geq I(X;U|Y) \\
D &\geq E[d(F(X,Y),\hat{X})] \\
\triangle &\leq H(\hat{X}|U,Y)+ I(X;\hat{X}|Y),
\end{align}
for some joint distributions of the form $P_{X,Y}(x,y)P_{U|X}(u|x)P_{\hat{X}|U,Y}(\hat{x}|u,y)$ with $|\mathcal{U}| \leq |\mathcal{X}| +3$.

In addition, let $\mathcal{R}_{\text{out}}^{\text{(help)}}$ be the same set as $\mathcal{R}_{\text{in}}^{\text{(help)}}$ except that the equivocation bound is replaced by
\begin{equation}
  \triangle \leq H(\hat{X}|U,Y)+ I(X;V,\hat{X}|Y),
\end{equation}
and the joint distributions factorized as $P_{X,Y}(x,y)P_{U|X}(u|x)P_{V,\hat{X}|U,Y}(v,\hat{x}|u,y)$.
\end{definition}

\begin{proposition}[Inner and outer bounds]\label{proposition:Chap_ENDUSER_4}
The rate-distortion-equivocation region $\mathcal{R}_{\text{help}}$ for the problem in Fig. \ref{fig:Chap_ENDUSER_help} satisfies $\mathcal{R}_{\text{in}}^{\text{(help)}} \subseteq \mathcal{R}_{\text{help}} \subseteq \mathcal{R}_{\text{out}}^{\text{(help)}}$.
\end{proposition}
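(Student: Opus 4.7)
The plan for the inner bound is standard Wyner--Ziv coding followed by a symbol-wise stochastic decoder. Fix $P_{U|X}$ and $P_{\hat{X}|U,Y}$; generate $2^{n(I(X;U)+\delta)}$ codewords $u^n$ i.i.d.\ from $P_U$ and randomly partition them into $2^{n(I(X;U|Y)+2\delta)}$ bins; the encoder sends the bin index of a $U^n$ jointly typical with $X^n$, the decoder recovers $U^n$ by the usual joint-typicality decoder and then draws $\hat{X}_i \sim P_{\hat{X}|U,Y}(\cdot\,|\,U_i,Y_i)$ independently using private randomness $M$ independent of $(X^n,Y^n,W)$. The rate and distortion inequalities are standard. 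For the equivocation, the symbol-wise stochastic decoder yields the Markov chain $\hat{X}^n-(U^n,Y^n)-X^n$, so
\begin{align*}
H(\hat{X}^n|Y^n) &= H(\hat{X}^n|U^n,Y^n)+I(U^n;\hat{X}^n|Y^n) \\
&\geq H(\hat{X}^n|U^n,Y^n)+I(X^n;\hat{X}^n|Y^n),
\end{align*}
and joint typicality of the codebook makes the right-hand side at least $n[H(\hat{X}|U,Y)+I(X;\hat{X}|Y)-\epsilon]$, as claimed.

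For the converse, fix any code achieving $(R,D,\triangle)$ and define $U_i \triangleq (W,Y^{i-1},Y_{i+1}^n,X^{i-1})$ and $V_i \triangleq (\hat{X}^{i-1},\hat{X}_{i+1}^n,Y^{i-1},Y_{i+1}^n,X^{i-1})$. Since $(X^n,Y^n)$ is memoryless, $U_i-X_i-Y_i$; and since $\hat{X}^n$ is produced from $(W,Y^n)$ via randomness independent of $X^n$, $(V_i,\hat{X}_i)\perp X_i \mid (U_i,Y_i)$, matching the outer-region factorization $P_{X,Y}P_{U|X}P_{V,\hat{X}|U,Y}$. The rate bound follows from
\[
n(R+\delta)\geq I(X^n;W|Y^n)=\sum_i I(X_i;W\mid X^{i-1},Y^n)=\sum_i I(X_i;U_i|Y_i),
\]
using $X_i\perp(X^{i-1},Y^{i-1},Y_{i+1}^n)\mid Y_i$. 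For the equivocation, the Markov chain $\hat{X}^n-(W,Y^n)-X^n$ gives
\begin{align*}
H(\hat{X}^n|Y^n) &= H(\hat{X}^n|W,Y^n)+I(W;\hat{X}^n|Y^n) \\
&= H(\hat{X}^n|W,Y^n)+I(X^n;\hat{X}^n|Y^n).
\end{align*}
Dropping $\hat{X}^{i-1}$ from the conditioning and using $\hat{X}_i \perp X^{i-1} \mid (W,Y^n)$ yields $H(\hat{X}^n|W,Y^n) \leq \sum_i H(\hat{X}_i|U_i,Y_i)$, while the chain rule on $X^n$ combined with $X_i\perp(X^{i-1},Y^{i-1},Y_{i+1}^n)\mid Y_i$ turns the second term into $\sum_i I(X_i;V_i,\hat{X}_i|Y_i)$. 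Time-sharing with a uniform auxiliary $Q$ independent of everything and the standard support lemma (as in Proposition~\ref{proposition:Chap_ENDUSER_3}) complete the single-letterization.

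The main obstacle is coordinating the definitions of $U_i$ and $V_i$ so that both Markov conditions of the outer region hold simultaneously. Single-letterizing $I(X^n;\hat{X}^n|Y^n)$ naturally requires the past/future reconstructions, side information, and $X^{i-1}$ inside $V_i$; preserving $(V,\hat{X})-(U,Y)-X$ then forces $X^{i-1}$ into $U_i$; and $U-X-Y$ survives this augmentation precisely because $(X^n,Y^n)$ is memoryless. This coupling is also the reason the inner and outer equivocation bounds do not match: the extra auxiliary $V$ in the outer bound captures exactly the dependence of $\hat{X}_i$ on the block quantities $(\hat{X}^{i-1},X^{i-1})$ that the block-Markov converse cannot eliminate.
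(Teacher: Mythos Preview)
Your proposal is correct and follows essentially the same route as the paper. Your auxiliary choices $U_i=(W,X^{i-1},Y^{n\setminus i})$ and $V_i=(X^{i-1},Y^{n\setminus i},\hat X^{n\setminus i})$ coincide exactly with the paper's, and both the rate and equivocation single-letterizations match. The only cosmetic difference is the order of the decomposition: for the outer bound you split $H(\hat X^n|Y^n)=H(\hat X^n|W,Y^n)+I(W;\hat X^n|Y^n)$ and then invoke the Markov chain $\hat X^n-(W,Y^n)-X^n$ together with $W=f(X^n)$ to replace $W$ by $X^n$, whereas the paper splits as $H(\hat X^n|X^n,Y^n)+I(X^n;\hat X^n|Y^n)$ and then inserts $W$ via determinism; the two are equivalent. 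Similarly, for the inner bound you condition on $U^n$ and use data processing via $\hat X^n-(U^n,Y^n)-X^n$, while the paper conditions on $X^n$ and uses the same Markov chain in the other direction; both yield $H(\hat X|U,Y)+I(X;\hat X|Y)$.
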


\begin{proof}
The proof is given in Appendix \ref{appendix:Chap_ENDUSER_proposition4} in which the achievable scheme implements Wyner-Ziv type coding with the additional use of a stochastic decoder. We
note that since there is no eavesdropper in this setting, no layering is used in the achievable scheme. As for the outer bound, the presence of random variable $V$ in
$\mathcal{R}_{\text{out}}^{\text{(help)}}$ can be argued similarly as in the proof of Proposition \ref{proposition:Chap_ENDUSER_1}.
\end{proof}

\begin{remark}
One example showing that stochastic decoder can enlarge the rate-distortion-equivocation region is when $Y=X$ in Fig. \ref{fig:Chap_ENDUSER_help}. Since the source is available completely at the decoder, we do not need to send any description over the rate-limited link and the zero rate is achievable. In this case, we have that $\mathcal{R}_{\text{help}}$ is given by the inner bound $\mathcal{R}_{\text{in}}^{\text{(help)}}$ where $X=Y$ and $U=\varnothing$. For any positive $D$, the stochastic decoder could randomly put out a reconstruction sequence that still satisfies the distortion level $D$, and achieve a positive equivocation rate as opposed to the zero equivocation in the case of a deterministic decoder.
\end{remark}
\section{Binary Example}\label{sec:Chap_ENDUSER_example}
In this section, we consider an example illustrating the potential gain from allowing the use of a stochastic decoder. Specifically, we consider the setting in Fig. \ref{fig:Chap_ENDUSER_eve} under memoryless reconstruction and assumptions that  $Z=\emptyset$ and $F(X,Y)=X$. Then we evaluate the corresponding result in Proposition \ref{proposition:Chap_ENDUSER_3}.

Let $\mathcal{X} = \mathcal{\hat{X}}= \{0,1\}$ be binary source and reconstruction alphabets. We assume that the source symbol $X$ is distributed according to Bernoulli(1/2), and side information $Y \in \{0,1,e\}$ is an erased version of the source with an erasure probability $p_e$. The Hamming distortion measure is assumed, i.e., $d(x,\hat{x})=1$ if $x\neq\hat{x}$, and zero otherwise. Inspired by the optimal choice of $U$ in the Wyner-Ziv result \cite{WynerZiv}, we let $U$ be the output of a BSC($p_u$), $p_u \in [0,1/2]$ with input $X$. The reconstruction symbol generated from a stochastic decoder is chosen s.t. $\hat{X} = Y$ if $Y \neq e$, otherwise $\hat{X} \sim P_{\hat{X}|U}$, where $P_{\hat{X}|U}$ is modelled as a BSC($p_2$), $p_2 \in [0,1/2]$. With these assumptions at hand, the inner bound to the rate-distortion-equivocation region in Proposition \ref{proposition:Chap_ENDUSER_3} can be expressed as
\begin{align*}
\mathcal{R}_{\text{in,random}} = \{(R,D,\triangle)| R &\geq 1-h(p_u) \\
D &\geq p_e(p_u\star p_2) \\
\triangle &\leq h(p_u(1-p_e) + p_2p_e)\\
\text{for some} \ p_u,p_2 \in [0,1/2] \},
\end{align*}
where $h(\cdot)$ is a binary entropy function and $a \star b \triangleq a(1-b) + (1-a)b$.

\begin{figure}[t]
    \centering
    \psfrag{del}[][]{\small{$\triangle_{\text{sat}}$}}
    \includegraphics[width=0.65\textwidth]{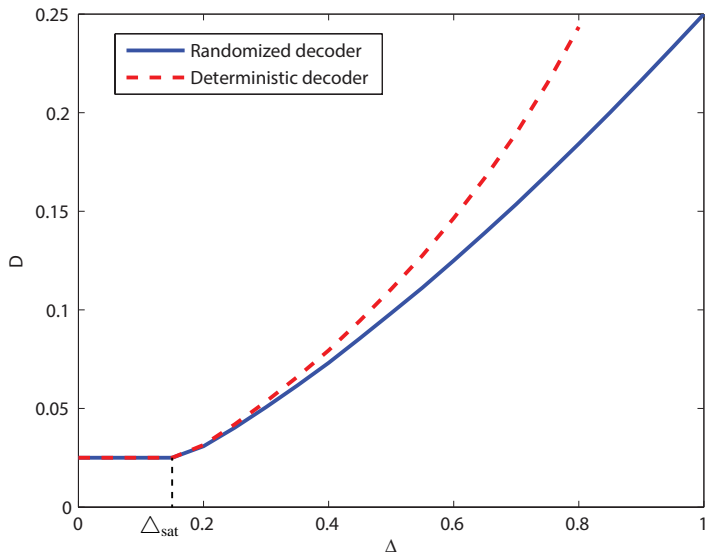}
    \caption{Achievable minimum distortion w.r.t. equivocation for a fixed rate $R=0.7136$, and $p_e=0.5$.}\label{fig:Chap_ENDUSER_example}
\end{figure}

For comparison, we also evaluate the inner bound
for the case of the Wyner-Ziv optimal deterministic decoder by setting $p_2=0$.
We plot the achievable minimum distortion as a function of equivocation rate for a fixed $R=0.7136$, where $p_e=0.5$. Fig. \ref{fig:Chap_ENDUSER_example} shows the tradeoff between achievable minimum distortion and equivocation rate for a fixed rate $R$.  We can see that in general the minimum distortion is sacrificed for a higher equivocation. For the same particular structure of $P_{U|X}$ and the given deterministic decoder in this setting, it shows that, for a given rate $R$ and distortion $D$, a higher equivocation rate $\triangle$ can be achieved by using a stochastic decoder.\footnote{Here we only evaluate and compare inner bounds on the rate-distortion-equivocation regions to illustrate a potential gain of allowing the use of a stochastic decoder.}
As for the low equivocation region, we observe a saturation of distortion because the minimum distortion is limited by the rate. The value $\triangle_{\text{sat}}$ at which the minimum distortion cannot be lowered by decreasing $\triangle$ can be specified as $\triangle_{\text{sat}}= h((1-p_e)h^{-1}(1-R))$, and the corresponding $D_{\text{min}}(R,\triangle_{\text{sat}}) = p_e h^{-1}(1-R)$ is the minimum distortion according to the Wyner-Ziv rate-distortion function. It could also be interesting to see how good the inner bounds are by evaluating the outer bound result. However, it involves the optimization over an auxiliary random variable which is not straightforward and will be left for future work.

\emph{Special case:} In the special case where $Y=\emptyset$, the gain can be shown as follows (cf. Remark \ref{remark:Chap_ENDUSER_random_dec_gain}). If the decoder is a deterministic mapping, the achievable equivocation rate is always zero since the eavesdropper is as strong as the decoder. The corresponding distortion-rate function for this example is given by $D \geq h^{-1}(1-R)$ \cite[Ch.3]{ElGamalKim}. However, by using a stochastic decoder as above, we can achieve $D \geq h^{-1}(1-R)\star h^{-1}(\triangle)$ (by letting $p_e=1$ in $\mathcal{R}_{\text{in,random}}$). For $D=h^{-1}(1-R) \star c$, where $c \in (0,1/2]$, we can achieve strictly positive equivocation rate  $h(c)$.

\section{Conclusion}
In this paper, we introduced a new privacy metric (end-user privacy constraint) in the problems of lossy source coding with side information. We considered several problems where the end-user privacy constraint is imposed at different nodes, namely the eavesdropper, the encoder, and the helper. Since the goal of end-user privacy is to protect the reconstruction sequence generated at the decoder against any unwanted inferences, we allow the decoder mapping to be a random mapping, and it was shown by example that there exist cases where a stochastic decoder strictly enlarges the rate-distortion-equivocation region as compared to the one derived for deterministic decoders. In general, characterizing the complete rate-distortion-equivocation region for the setting with end-user privacy is difficult since conditioned on the source description, the reconstruction process is not necessarily memoryless. As seen in a special case of end-user privacy at the eavesdropper, when we restrict the reconstruction symbol to depend only on the source description and the current side information symbol, the complete rate-distortion-equivocation region can be given.
\appendices

\section{Proof of Proposition \ref{proposition:Chap_ENDUSER_1}} \label{appendix:Chap_ENDUSER_proposition1}
The inner bound proofs for the rate and distortion constraints follow from the coding scheme which utilizes layered coding and Wyner-Ziv binning. That is, we have two layers of codewords $T^n$ and $U^n$ forming the codebook, and after encoding, only the bin indices of the chosen codewords are transmitted to the decoder. Also, instead of using the deterministic function at the decoder, we allow stochastic decoder to generate the reconstruction sequence, i.e., the decoder puts out $\hat{X}^n$, where $\hat{X}_i \sim P_{\hat{X}|U,Y}$ for each $i=1,\ldots,n$. The outline of the proof is given below.

Fix $P_{U|X}, P_{T|U}$, and $P_{\hat{X}|U,Y}$. Randomly and independently generate $2^{n(I(X;T) + \delta_{\epsilon})}$ $t^n(j)$ sequences, each i.i.d. according to $\prod_{i=1}^nP_T(t_i)$, $j \in [1:2^{n(I(X;T) + \delta_{\epsilon})}]$. Then distribute them uniformly at random into $2^{n(I(X;T|Y) + 2\delta_{\epsilon})}$ equal-sized bins $b_T(w_1)$, $w_1 \in [1:2^{nI(X;T|Y) + 2\delta_{\epsilon}}]$. For each $j$, randomly and conditionally independently generate $2^{n(I(X;U|T) + \delta_{\epsilon})}$ $u^n(j,k)$ sequences, each i.i.d. according to $\prod_{i=1}^nP_{U|T}(u_i|t_i)$, $k \in [1:2^{n(I(X;U|T) + \delta_{\epsilon})}]$, and distribute these sequences uniformly at random into $2^{n(I(X;U|T,Y) + 2\delta_{\epsilon})}$ equal-sized bins $b_U(j,w_2)$, $w_2 \in [1:2^{nI(X;U|T,Y) + 2\delta_{\epsilon}}]$. For encoding, the encoder looks for $t^n(j)$ and $u^n(j,k)$ jointly typical with $x^n$. With high probability, it will find such codewords and then send the corresponding bin indices $w_1$ and $w_2$ to the decoder. The total rate is thus equal to $I(X;T|Y)+I(X;U|T,Y) + 4\delta_{\epsilon} = I(X;U|Y) + 4\delta_{\epsilon}$. Based on the received bin indices, the decoder, with high probability, will find the unique sequences $t^n(j) \in b_T(w_1)$ and $u^n(j,k) \in b_U(j,w_2)$ such that they are jointly typical with $y^n$. Then it puts out $\hat{x}^n$ where $\hat{x}_i$ is randomly generated according to $P_{\hat{X}|U,Y}(\hat{x}_i|u_i,y_i)$, $i=1,\ldots,n$.

Let $T^n(J)$ and $U^n(J,K)$ be the codewords chosen at the encoder, and $W_1$ and $W_2$ be the corresponding bin indices of the bins which $T^n(J)$ and $U^n(J,K)$ belong to. Then $W_1$ and $W_2$ are functions of $J$ and $K$. Since the tuple $(X^n,T^n(J),U^n(J,K),Y^n,\hat{X}^n) \in \mathcal{T}_{\epsilon}^{(n)}(X,T,U,Y,\hat{X})$ with high probability, it can be shown that the distortion constraint is satisfied if $E[d(F(X,Y),\hat{X})] \leq D$.

Next, we give a sketch of the proof for the equivocation constraint. Let $\mathcal{C}_n$ be a random variable representing the randomly chosen codebook.
By Fano's inequality, we have that for any $\mathcal{C}_n = \mathfrak{C}_n$, $H(J,K|W_1,W_2,Y^n,\mathcal{C}_n = \mathfrak{C}_n) \leq 1 + \mathrm{Pr}(\mathcal{E}) \log(|\mathcal{J}||\mathcal{K}|)$, where $\mathrm{Pr}(\mathcal{E})$ is the probability that $(J,K)$ cannot be identified from $(W_1,W_2,Y^n)$.  From the decoding process, we have that $\mathrm{Pr}(\mathcal{E}) \rightarrow0$ as $n\rightarrow \infty$. Then, it follows that
\begin{align}
  \frac{1}{n} H(J,K|W_1,W_2,Y^n,\mathcal{C}_n) &= \sum_{\mathfrak{C}_n}p(\mathfrak{C}_n)\frac{1}{n} H(J,K|W_1,W_2,Y^n,\mathcal{C}_n = \mathfrak{C}_n)\nonumber \\
  &\leq \sum_{\mathfrak{C}_n}p(\mathfrak{C}_n) (\frac{1}{n} + \mathrm{Pr}(\mathcal{E})\frac{1}{n} \log(|\mathcal{J}||\mathcal{K}|))\nonumber \\
   &\leq \epsilon_n, \label{eq:Chap_ENDUSER_Fano1}
\end{align}
where $\epsilon_n \rightarrow0$ as $n\rightarrow \infty$.

Since $(X^n,T^n(J),U^n(J,K),Y^n,\hat{X}^n) \in \mathcal{T}_{\epsilon}^{(n)}$ with high probability, we also have the following lemmas.
\begin{lemma} \label{lemma:Chap_ENDUSER_1}
The following bound holds, $H(X^n|J,K,Y^n,Z^n,\mathcal{C}_n) \leq n[H(X|T,U,Y,Z)+\delta_{\epsilon}]$.
\end{lemma}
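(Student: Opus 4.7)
The plan is to reduce the bound to a standard conditional-typicality counting argument by exploiting the fact that $(J,K,\mathcal{C}_n)$ determines the codewords $T^n=T^n(J)$ and $U^n=U^n(J,K)$ deterministically. More precisely, I will first use this determinism to write
\begin{equation*}
H(X^n|J,K,Y^n,Z^n,\mathcal{C}_n) \leq H(X^n|T^n(J),U^n(J,K),Y^n,Z^n,\mathcal{C}_n),
\end{equation*}
which shifts the problem from reasoning about bin/codeword indices to reasoning about the codewords themselves.

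Next, I would introduce an indicator $E$ of the joint typicality event $\{(X^n,T^n(J),U^n(J,K),Y^n,Z^n)\in\mathcal{T}_\epsilon^{(n)}(X,T,U,Y,Z)\}$. By the encoding rule (which guarantees joint typicality of $(X^n,T^n,U^n)$ with high probability) together with the memorylessness of $P_{X,Y,Z}$ and the Markov chain $(T,U)-X-(Y,Z)$ induced by the code construction, the Markov lemma yields $\Pr(E=0)\to 0$ as $n\to\infty$. Then applying the chain rule and the trivial bound $H(X^n|\cdot)\leq n\log|\mathcal{X}|$ gives
\begin{align*}
H(X^n|T^n,U^n,Y^n,Z^n,\mathcal{C}_n)
&\leq H(E) + H(X^n|E,T^n,U^n,Y^n,Z^n,\mathcal{C}_n) \\
&\leq 1 + \Pr(E=1)\,H(X^n|E=1,T^n,U^n,Y^n,Z^n,\mathcal{C}_n) + \Pr(E=0)\,n\log|\mathcal{X}|.
\end{align*}

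On the event $E=1$, the conditional typicality lemma bounds the number of $x^n$ sequences jointly typical with the given $(t^n,u^n,y^n,z^n)$ by $2^{n[H(X|T,U,Y,Z)+\delta_\epsilon']}$, so the conditional entropy there is at most $n[H(X|T,U,Y,Z)+\delta_\epsilon']$. Combining with the vanishing $\Pr(E=0)$ term and absorbing constants into a new $\delta_\epsilon$ (one that tends to zero as $\epsilon\to 0$ and $n\to\infty$) delivers the claimed inequality.

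The main obstacle is verifying that the joint typicality used in the counting step genuinely holds with high probability despite conditioning on the side information $Z^n$ at the eavesdropper, which plays no role in encoding or decoding. This is handled by noting that $(X^n,Y^n,Z^n)$ is i.i.d.\ under $P_{X,Y,Z}$, and the codewords satisfy $T^n-U^n-X^n$ and are generated conditionally independently of $(Y^n,Z^n)$ given $X^n$, so the Markov lemma applied to the chain $(T^n,U^n)-X^n-(Y^n,Z^n)$ promotes the typicality of $(X^n,T^n,U^n)$ ensured by the encoder to joint typicality of the full tuple $(X^n,T^n,U^n,Y^n,Z^n)$.
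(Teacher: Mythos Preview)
Your proposal is correct and follows essentially the same route as the paper: replace $(J,K,\mathcal{C}_n)$ by the codewords they determine, introduce a typicality indicator, split the entropy according to this indicator, and bound the typical part via the size of the conditional typical set. The only cosmetic differences are that the paper uses $E=0$ for the typical event (you use $E=1$) and drops $\mathcal{C}_n$ immediately after passing to the codewords, and you spell out the Markov-lemma justification for joint typicality of the full tuple that the paper simply asserts.
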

\begin{proof}
The proof is given in Appendix \ref{appendix:Chap_ENDUSER_Lemma1}.
\end{proof}
\begin{lemma} \label{lemma:Chap_ENDUSER_2}
If $\hat{X}^n \sim \prod_{i=1}^n P_{\hat{X}|U,Y}(\hat{x}_i|u_i,y_i)$, we have that $H(\hat{X}^n|U^n,Y^n,\mathcal{C}_n) \geq n[H(\hat{X}|U,Y)-\delta_{\epsilon}]$.
\end{lemma}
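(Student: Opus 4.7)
The plan is to exploit the memoryless structure of the stochastic decoder together with joint typicality of $(U^n,Y^n)$. Because $\hat{X}_i$ is drawn using only $(U_i,Y_i)$ through the prescribed kernel, the coordinates $\hat{X}_1,\ldots,\hat{X}_n$ are conditionally independent given $(U^n,Y^n,\mathcal{C}_n)$ and the codebook plays no further role in their conditional distribution. Typicality of the selected codeword and side information then lets me replace each per-letter entropy by $H(\hat{X}|U,Y)$ up to a vanishing correction.

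The first step is to write, using the product form of the decoder's kernel,
\begin{equation*}
H(\hat{X}^n|U^n,Y^n,\mathcal{C}_n) = \sum_{u^n,y^n} p(u^n,y^n) \sum_{i=1}^n H(\hat{X}|U=u_i,Y=y_i).
\end{equation*}
Next, I introduce the event $A = \{(U^n,Y^n) \in \mathcal{T}_\epsilon^{(n)}(U,Y)\}$; by the covering/packing analysis underlying the achievability scheme, $\Pr(A) \to 1$ as $n \to \infty$. After dropping the nonnegative contribution from $A^c$, the typical-average lemma applied on $A$ gives $|\pi(u,y) - P_{U,Y}(u,y)| \leq \epsilon P_{U,Y}(u,y)$ for the empirical distribution $\pi$ of $(U^n,Y^n)$, and therefore
\begin{equation*}
\frac{1}{n}\sum_{i=1}^n H(\hat{X}|U=u_i,Y=y_i) \geq (1-\epsilon)H(\hat{X}|U,Y).
\end{equation*}
Combining these bounds and absorbing $\Pr(A^c)\log|\hat{\mathcal{X}}|$ and $\epsilon H(\hat{X}|U,Y)$ into a single vanishing $\delta_\epsilon$ gives the claimed inequality.

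The only step requiring any care is the first, since it depends on the modelling choice that the decoder's kernel factorizes across coordinates given the decoded codeword, so that $\hat{X}_i$ depends only on $(U_i,Y_i)$ and not on neighbouring indices. Once this is in place, the remainder is a standard typical-average-lemma estimate and I do not expect substantive difficulty beyond bookkeeping of the vanishing terms.
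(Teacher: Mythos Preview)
Your proposal is correct and follows essentially the same route as the paper: invoke the Markov chain $\hat{X}^n-(U^n,Y^n)-\mathcal{C}_n$ to remove the codebook, use the product form of the decoder kernel to write the conditional entropy as $\sum_i H(\hat{X}\mid U=u_i,Y=y_i)$, split into typical and atypical $(u^n,y^n)$, drop the nonnegative atypical contribution, and on the typical set use $N(a,b\mid u^n,y^n)\ge n(1-\epsilon)P_{U,Y}(a,b)$ (your ``typical-average lemma'') to lower-bound the empirical average by $(1-\epsilon)H(\hat{X}\mid U,Y)$. The only cosmetic difference is that the paper writes out the empirical-count step explicitly rather than naming it as the typical-average lemma.
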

\begin{proof}
The proof is given in Appendix \ref{appendix:Chap_ENDUSER_Lemma2}.
\end{proof}

The equivocation averaged over all codebooks can be bounded as follows.
\begin{align*}
&H(\hat{X}^n|W_1,W_2,Z^n,\mathcal{C}_n) \\
&= H(\hat{X}^n|J,K,Y^n,Z^n,\mathcal{C}_n)+ I(\hat{X}^n;J,K,Y^n|W_1,W_2,Z^n,\mathcal{C}_n)\\
&\overset{(a)}{\geq} H(\hat{X}^n|U^n,Y^n,\mathcal{C}_n) + I(\hat{X}^n;Y^n|W_1,W_2,Z^n,\mathcal{C}_n)  + H(J,K|W_1,W_2,Y^n,Z^n,\mathcal{C}_n)-n\epsilon_n\\
&= H(\hat{X}^n|U^n,Y^n,\mathcal{C}_n) + H(Y^n,Z^n) + H(W_1,W_2|Y^n,Z^n,\mathcal{C}_n) -H(W_1,W_2,Z^n|\mathcal{C}_n) \\ &\qquad -H(Y^n|W_1,W_2,Z^n,\hat{X}^n,\mathcal{C}_n)+H(J,K|W_1,W_1,Y^n,Z^n,\mathcal{C}_n)-n\epsilon_n\\
&= H(\hat{X}^n|U^n,Y^n,\mathcal{C}_n) + H(Y^n,Z^n) + H(J,K|Y^n,Z^n,\mathcal{C}_n) -H(W_1,W_2,Z^n|\mathcal{C}_n)\\ &\qquad -H(Y^n|W_1,W_2,Z^n,\hat{X}^n,\mathcal{C}_n)-n\epsilon_n\\
&\overset{(b)}{\geq} H(\hat{X}^n|U^n,Y^n) + H(Y^n,Z^n) + I(J,K;X^n|Y^n,Z^n,\mathcal{C}_n)  -H(W_1|\mathcal{C}_n)\\ &\qquad -H(W_2|\mathcal{C}_n)-H(Z^n|W_1,\mathcal{C}_n)-H(Y^n|W_1,Z^n,\hat{X}^n,\mathcal{C}_n)-n\epsilon_n\\
&= H(\hat{X}^n|U^n,Y^n) + H(Y^n,Z^n) + I(J,K;X^n|Y^n,Z^n,\mathcal{C}_n) -H(W_1|\mathcal{C}_n)-H(W_2|\mathcal{C}_n)\\ &\qquad-H(Z^n|J,\mathcal{C}_n)-I(Z^n;J|W_1,\mathcal{C}_n)- H(Y^n|J,Z^n,\hat{X}^n,\mathcal{C}_n) -I(Y^n;J|W_1,Z^n,\hat{X}^n,\mathcal{C}_n)-n\epsilon_n\\
&\geq H(\hat{X}^n|U^n,Y^n) + H(Y^n,Z^n) + I(J,K;X^n|Y^n,Z^n,\mathcal{C}_n) -H(J|\mathcal{C}_n)\\ &\qquad -H(W_2|\mathcal{C}_n)-H(Z^n|J,\mathcal{C}_n)- H(Y^n|J,Z^n,\hat{X}^n,\mathcal{C}_n)-n\epsilon_n\\
&\overset{(c)}{\geq} n[H(\hat{X}|U,Y)+ H(Y,Z) + I(X;T,U|Y,Z)-I(X;T)-I(X;U|T,Y)\\ &\qquad -H(Z|T)-H(Y|T,Z,\hat{X}) -\delta_{\epsilon}'- \epsilon_n] \\
&\overset{(d)}{=} n[H(\hat{X}|U,Y) + I(\hat{X};Y|T)-I(\hat{X};Z|T) -I(U;Z|T,Y,\hat{X})-\delta_{\epsilon}'']\\
& \geq n[\triangle -\delta_{\epsilon}''],
\end{align*}
if $\triangle \leq H(\hat{X}|U,Y) + I(\hat{X};Y|T)-I(\hat{X};Z|T) -I(U;Z|T,Y,\hat{X})$,
 where $(a)$ follows from, conditioned on the codebook, we have the Markov chain $\hat{X}^n-(U^n(J,K),Y^n)-(J,K,Z^n)$, and from Fano's inequality in \eqref{eq:Chap_ENDUSER_Fano1}, $(b)$ follows the from the Markov chain $\hat{X}^n-(U^n(J,K),Y^n)-\mathcal{C}_n$, from $(J,K)$ is a function of $X^n$, and that conditioning reduces entropy, $(c)$ follows from the codebook generation and from bounding the term $H(\hat{X}^n|U^n,Y^n)$ where $\hat{X}^n \sim \prod_{i=1}^n P_{\hat{X}|U,Y}$ as in Lemma \ref{lemma:Chap_ENDUSER_2}, and the terms $H(X^n|J,K,Y^n,Z^n,\mathcal{C}_n)$, $H(Z^n|J,\mathcal{C}_n)$, and $H(Y^n|J,Z^n,\hat{X}^n,\mathcal{C}_n)$ for which the proofs follow similarly as that of Lemma \ref{lemma:Chap_ENDUSER_1}, and $(d)$ from the Markov chains $T-U-X-(Y,Z)$ and $\hat{X}-(U,Y)-(X,Z,T)$.

The cardinality bounds on the sets $\mathcal{T}$ and $\mathcal{U}$ in $\mathcal{R}_{\text{in}}^{(\text{eve})}$ can be proved using the support lemma \cite{CsiszarBook} and is given in Appendix \ref{appendix:Chap_ENDUSER_cardinality}.

As for the outer bound, let $T_i \triangleq (W,Z^{i-1},Y_{i+1}^n)$, $U_i \triangleq (W,Z^{i-1},Y^{n \setminus i})$ and $V_i \triangleq (W,Z^{i-1},Y_{i+1}^n,\hat{X}^{n \setminus i})$ which satisfy $T_i-U_i-X_i-(Y_i,Z_i)$ and $(V_i,\hat{X}_i)-(U_i,Y_i)-(X_i,Z_i,T_i)$ for all $i=1,\ldots,n$. The outer bound proof for the rate and distortion constraints follows similarly as that of the Wyner-Ziv problem with the exception of the part related to stochastic decoder. That is, we have
\begin{align*}
n(R + \delta_n) &\geq H(W) \geq I(X^n,Z^n;W|Y^n)\\
&= \sum_{i=1}^n H(X_i,Z_i|Y_i) -H(X_i,Z_i|W,X^{i-1},Z^{i-1},Y^n)\\
&\overset{(a)}{\geq} \sum_{i=1}^n H(X_i,Z_i|Y_i) -H(X_i,Z_i|U_i,Y_i)\\
&\geq \sum_{i=1}^n I(X_i;U_i|Y_i),
\end{align*}
where $(a)$ follows from the definition of $U_i$ and that conditioning reduces entropy, and
\begin{align*}
D + \delta_n &\geq E[d^{(n)}(F^{(n)}(X^n,Y^n),\hat{X}^n)] \\
&= \frac{1}{n}\sum_{i=1}^n E[d(F(X_i,Y_i),\hat{X}_i)].
\end{align*}

The equivocation bound follows below.
\begin{align}
n(\triangle - \delta_n) &\leq H(\hat{X}^n|W,Z^n) = H(\hat{X}^n|W) -I(\hat{X}^n;Z^n|W) \nonumber \\
&= H(\hat{X}^n|W,Y^n) + I(\hat{X}^n;Y^n|W) -I(\hat{X}^n;Z^n|W) \nonumber \\
&\leq \sum_{i=1}^n H(\hat{X}_i|W,Y^n) + H(Y_i|W,Y_{i+1}^n) -H(Y_i|W,Y_{i+1}^n,\hat{X}^n) \nonumber \\ &\qquad -H(Z_i|W,Z^{i-1})+ H(Z_i|W,Z^{i-1},\hat{X}^n) \label{eq:random_variable_V} \\
&\overset{(a)}{\leq} \sum_{i=1}^n H(\hat{X}_i|W,Y^n,Z^{i-1}) - I(Y_i;W,Y_{i+1}^n) +I(Y_i;\hat{X}_i) + I(Y_i; W,Y_{i+1}^n,\hat{X}^{n \setminus i}|\hat{X}_i) \nonumber\\ &\qquad +I(Z_i;W,Z^{i-1})-I(Z_i;\hat{X}_i)  -I(Z_i;W,Z^{i-1},\hat{X}^{n \setminus i}|\hat{X}_i)\nonumber \\
&\overset{(b)}{=} \sum_{i=1}^n H(\hat{X}_i|W,Y^n,Z^{i-1}) - I(Y_i;W,Z^{i-1},Y_{i+1}^n) +I(Y_i;\hat{X}_i) + I(Y_i; W,Z^{i-1},Y_{i+1}^n,\hat{X}^{n \setminus i}|\hat{X}_i) \nonumber\\ &\qquad+I(Z_i;W,Z^{i-1},Y_{i+1}^n) -I(Z_i;\hat{X}_i) -I(Z_i;W,Z^{i-1},Y_{i+1}^n,\hat{X}^{n \setminus i}|\hat{X}_i)\nonumber\\
&\overset{(c)}{=} \sum_{i=1}^n H(\hat{X}_i|U_i,Y_i) - I(Y_i;T_i) +I(Y_i;\hat{X}_i)+ I(Y_i; T_i,V_i|\hat{X}_i) \nonumber\\ &\qquad +I(Z_i;T_i) -I(Z_i;\hat{X}_i)  -I(Z_i;T_i,V_i|\hat{X}_i)\nonumber\\
&= \sum_{i=1}^n H(\hat{X}_i|U_i,Y_i) +I(Y_i;V_i,\hat{X}_i|T_i) -I(Z_i;V_i,\hat{X}_i|T_i)\nonumber,
\end{align}
where $(a)$ follows from the Markov chain $\hat{X}_i-(W,Y^n)-Z^{i-1}$, $(b)$ follows from the Csisz\'{a}r's sum identity, $\sum_{i=1}^n I(Y_i;Z^{i-1}|W,Y_{i+1}^n) - I(Z_i;Y_{i+1}^n|W,Z^{i-1}) =0$ and  $\sum_{i=1}^n I(Y_i;Z^{i-1}|W,\hat{X}^n,Y_{i+1}^n) - I(Z_i;Y_{i+1}^n|W,\hat{X}^n,Z^{i-1}) =0$, and $(c)$ follows from the definitions of $T_i$, $U_i$ and $V_i$.

Note that from the definitions of $T_i$, $U_i$, and $V_i$, we have that $T_i$ is a function of $U_i$ or $V_i$. So we can further restrict the set of joint distributions to satisfy $H(T_i|U_i)=H(T_i|V_i)=0$. The proof ends using the standard time-sharing argument.

\section{Proof of Lemma \ref{lemma:Chap_ENDUSER_1}} \label{appendix:Chap_ENDUSER_Lemma1}
Let $E$ be the binary random variable taking value $0$ if $(X^n,T^n(J),U^n(J,K),Y^n,Z^n) \in \mathcal{T}_{\epsilon}^{(n)}$, and $1$ otherwise. Since $(X^n,T^n(J),U^n(J,K),Y^n,Z^n) \in \mathcal{T}_{\epsilon}^{(n)}$ with high probability, we have $\text{Pr}(E=1) \leq \delta_{\epsilon}$. It follows that
\begin{align*}
 &H(X^n|J,K,Y^n,Z^n,\mathcal{C}_n)\leq H(X^n|T^n(J),U^n(J,K),Y^n,Z^n)\\
 &\leq H(X^n|T^n,U^n,Y^n,Z^n,E) + H(E)\\
 &\leq \text{Pr}(E=0) H(X^n|T^n,U^n,Y^n,Z^n,E=0) + \text{Pr}(E=1) H(X^n|T^n,U^n,Y^n,Z^n,E=1) + h(\delta_{\epsilon})\\
 &\leq H(X^n|T^n,U^n,Y^n,Z^n,E=0) +\delta_{\epsilon} H(X^n)+ h(\delta_{\epsilon}) \\
 &\leq H(X^n|T^n,U^n,Y^n,Z^n,E=0) + n\delta_{\epsilon} \log|\mathcal{X}| + h(\delta_{\epsilon})\\
&= \sum_{(t^n,u^n,y^n,z^n) \in \mathcal{T}_{\epsilon}^{(n)}} p(t^n,u^n,y^n,z^n|E=0) H(X^n|T^n=t^n,U^n=u^n,Y^n=y^n,Z^n=z^n,E=0) \\ &\qquad + n\delta_{\epsilon} \log|\mathcal{X}| + h(\delta_{\epsilon})\\
&\leq \sum_{(t^n,u^n,y^n,z^n) \in \mathcal{T}_{\epsilon}^{(n)}} p(t^n,u^n,y^n,z^n|E=0) \log|\mathcal{T}_{\epsilon}^{(n)}(X|t^n,u^n,y^n,z^n)| + n\delta_{\epsilon} \log|\mathcal{X}| + h(\delta_{\epsilon})\\
 &\leq n[H(X|T,U,Y,Z)+\delta_{\epsilon}'],
\end{align*}
where $h(\cdot)$ is the binary entropy function, and the last inequality follows from the property of joint typical set \cite{ElGamalKim} with $\delta_{\epsilon}, \delta_{\epsilon}' \rightarrow 0$ as $\epsilon \rightarrow 0$, and $\epsilon \rightarrow 0$ as $n \rightarrow \infty$.

\section{Proof of Lemma \ref{lemma:Chap_ENDUSER_2}} \label{appendix:Chap_ENDUSER_Lemma2}
Consider $H(\hat{X}^n|U^n,Y^n,\mathcal{C}_n)$ where $\hat{X}^n$ is distributed i.i.d. $\sim P_{\hat{X}|U,Y}$. Note that $\hat{X}^n-(U^n,Y^n)-\mathcal{C}_n$. It then follows that
\begin{align*}
H(\hat{X}^n|U^n,Y^n) &= \sum_{(u^n,y^n) \in \mathcal{T}_{\epsilon}^{(n)}} p(u^n,y^n)H(\hat{X}^n|U^{n}=u^{n},Y^n=y^n)  \\ &\qquad  + \sum_{(u^n,y^n) \notin \mathcal{T}_{\epsilon}^{(n)}} p(u^n,y^n) H(\hat{X}^n|U^{n}=u^{n},Y^n=y^n) \\
& \overset{(a)}{\geq} \sum_{(u^n,y^n) \in \mathcal{T}_{\epsilon}^{(n)}} p(u^n,y^n) \sum_{i=1}^{n} H(\hat{X}_i|U_{i}=u_{i},Y_i=y_i)\\
& = \sum_{(u^n,y^n) \in \mathcal{T}_{\epsilon}^{(n)}} p(u^n,y^n) \sum_{a \in \mathcal{U},b \in \mathcal{Y}} N(a,b|u^n,y^n)H(\hat{X}|U=a,Y=b)\\
& \overset{(b)}{\geq} \sum_{(u^n,y^n) \in \mathcal{T}_{\epsilon}^{(n)}} p(u^n,y^n) \sum_{a \in \mathcal{U},b \in \mathcal{Y}} n p(a,b)(1-\epsilon)H(\hat{X}|U=a,Y=b)\\
& \geq n (H(\hat{X}|U,Y) -\delta_{\epsilon}),
\end{align*}
where $(a)$ follows from memoryless property of $P_{\hat{X}^n|U^n,Y^n}$, and $(b)$ follows from the definition of joint typical set with $\delta_{\epsilon} \rightarrow 0$ as $\epsilon \rightarrow 0$, and $\epsilon \rightarrow 0$ as $n \rightarrow \infty$.

\section{Proof of Proposition \ref{proposition:Chap_ENDUSER_2}} \label{appendix:Chap_ENDUSER_proposition2}
The inner bound proof for the rate and distortion constraints follows that of source coding with causal side information \cite{WeissmanElGamal} with the additional use of a stochastic decoder. Since the side information is only available causally at the decoder, it cannot be used for binning to reduce the rate. Here, we just use the rate-distortion code with codewords $U^n$. The decoder then generates $\hat{X}^n$, where $\hat{X}_i \sim P_{\hat{X}|U,Y}$ for $i=1,\ldots,n$. The proof of equivocation constraint is given below. It is different from the non-causal case in that the scheme does not utilize binning. Here $W$ denotes the index of codeword $U^n$. The equivocation averaged over all codebooks can be bounded as follows.
\begin{align*}
&H(\hat{X}^n|W,Z^n,\mathcal{C}_n)  = H(\hat{X}^n|W,Z^n,Y^n,\mathcal{C}_n) + I(\hat{X}^n;Y^n|W,Z^n,\mathcal{C}_n)\\
&= H(\hat{X}^n|W,Y^n,Z^n,\mathcal{C}_n) + H(Y^n|W,Z^n,\mathcal{C}_n)- H(Y^n|W,Z^n,\hat{X}^n,\mathcal{C}_n)\\
&\overset{(a)}{=} H(\hat{X}^n|U^n(W),Y^n,\mathcal{C}_n) + H(Y^n,Z^n)+H(W|Y^n,Z^n,\mathcal{C}_n) -H(W|\mathcal{C}_n)\\ &\qquad -H(Z^n|W,\mathcal{C}_n)- H(Y^n|W,Z^n,\hat{X}^n,\mathcal{C}_n)\\
&\overset{(b)}{\geq} n[\underbrace{H(\hat{X}|U,Y) + H(Y,Z)- I(X;U)}_{\triangleq P} - \delta_{\epsilon}'] +H(W|Y^n,Z^n,\mathcal{C}_n)-H(Z^n|W,\mathcal{C}_n)\\ &\qquad - H(Y^n|W,Z^n,\hat{X}^n,\mathcal{C}_n)\\
& \overset{(c)}{=} n[P -\delta_{\epsilon}'] + I(W;X^n|Y^n,Z^n,\mathcal{C}_n)-H(Z^n|W,\mathcal{C}_n) - H(Y^n|W,Z^n,\hat{X}^n,\mathcal{C}_n)\\
& = n[P -\delta_{\epsilon}'] + H(X^n|Y^n,Z^n)-H(X^n|W,Y^n,Z^n,\mathcal{C}_n)-H(Z^n|W,\mathcal{C}_n)\\ &\qquad - H(Y^n|W,Z^n,\hat{X}^n,\mathcal{C}_n)]\\
&\overset{(d)}{\geq} n[H(\hat{X}|U,Y) + H(Y,Z)-I(X;U) + H(X|Y,Z) -H(X|U,Y,Z) \\ &\qquad -H(Z|U)-H(Y|U,Z,\hat{X})-\delta_{\epsilon}'']\\
&\overset{(e)}{=} n[H(\hat{X}|U,Z)-\delta_{\epsilon}'']\\
&\geq n[\triangle -\delta_{\epsilon}''],
\end{align*}
if $\triangle \leq H(\hat{X}|U,Z)$, where $(a)$ follows from, conditioned on the codebook, we have the Markov chain $\hat{X}^n-(U^n,Y^n)-(W,Z^n)$, $(b)$ follows from the Markov chain $\hat{X}^n-(U^n(J,K),Y^n)-\mathcal{C}_n$, from bounding the term $H(\hat{X}^n|U^n,Y^n)$ where $\hat{X}^n \sim \prod_{i=1}^n P_{\hat{X}|U,Y}$ as in Lemma \ref{lemma:Chap_ENDUSER_2} and from the codebook generation, $(c)$ follows since $W$ is a function of $X^n$, $(d)$ follows from bounding the terms $H(X^n|W,Y^n,Z^n,\mathcal{C}_n)$, $H(Z^n|W,\mathcal{C}_n)$, and $H(Y^n|W,Z^n,\hat{X}^n,\mathcal{C}_n)$ similarly as that of Lemma \ref{lemma:Chap_ENDUSER_1}, and $(e)$ follows from the Markov chains $U-X-(Y,Z)$ and $\hat{X}-(U,Y)-(X,Z)$.

The cardinality bound on the set $\mathcal{U}$ in $\mathcal{R}_{\text{in}}^{(\text{eve,causal})}$ can be proved using the support lemma that $\mathcal{U}$ should have $|\mathcal{X}|-1$ elements to preserve $P_{X}$, plus four more for $H(X|U)$, $H(\hat{X}|U,Z)$, $E[d(F(X,Y),\hat{X}]$, and the Markov relation $\hat{X}-(U,Y)-(X,Z)$.

For the outer bound proof, let $U_i \triangleq (W,Y^{i-1},\hat{X}^{i-1})$, $T_i \triangleq (W,\hat{X}^{i-1})$  which satisfy $T_i-U_i-X_i-(Y_i,Z_i)$ and $\hat{X}_i-(U_i,Y_i)-(X_i,Z_i,T_i)$ for all $i=1,\ldots,n$. It then follows that
\begin{align*}
n(R + \delta_n) &\geq H(W) \\
&\geq I(X^n;W)\\
&= \sum_{i=1}^n H(X_i) -H(X_i|W,X^{i-1})\\
&\overset{(a)}{=} \sum_{i=1}^n H(X_i) -H(X_i|W,X^{i-1},Y^{i-1},\hat{X}^{i-1})\\
&\geq \sum_{i=1}^n I(X_i;U_i),
\end{align*}
where $(a)$ follows from the Markov chain $X_i-(W,X^{i-1})-(Y^{i-1},\hat{X}^{i-1})$.
And
\begin{align*}
D + \delta_n &\geq E[d^{(n)}(F^{(n)}(X^n,Y^n),\hat{X}^n)] \\
&= \frac{1}{n}\sum_{i=1}^n E[d(F(X_i,Y_i),\hat{X}_i)].
\end{align*}
And
\begin{align*}
n(\triangle - \delta_n) &\leq H(\hat{X}^n|W,Z^n) \\
&= \sum_{i=1}^n H(\hat{X}_i|T_i,Z_i).
\end{align*}
Note that from the definitions of $T_i$ and $U_i$, we have that $T_i$ is a function of $U_i$. So we can further restrict the set of joint distributions to satisfy $H(T_i|U_i)=0$. The proof ends using the standard time-sharing argument.

\section{Proof of Proposition \ref{proposition:Chap_ENDUSER_4}} \label{appendix:Chap_ENDUSER_proposition4}
The inner bound proof for the rate and distortion constraints follows from the scheme that implements Wyner-Ziv type coding with the additional use of a stochastic decoder. We only give a sketch of the proof of equivocation constraint here. The equivocation averaged over all codebooks can be bounded as follows.
\begin{align*}
H(\hat{X}^n|Y^n,\mathcal{C}_n) &= H(\hat{X}^n|X^n,Y^n,\mathcal{C}_n)+ I(\hat{X}^n;X^n,|Y^n,\mathcal{C}_n)\\
&\overset{(a)}{=} H(\hat{X}^n|X^n,U^n,Y^n,\mathcal{C}_n)+ I(\hat{X}^n;X^n,|Y^n,\mathcal{C}_n)\\
&\overset{(b)}{\geq} n[H(\hat{X}|U,Y)+ H(X|Y)- H(X|Y,\hat{X})-\delta_{\epsilon}'] \\
&= n[H(\hat{X}|U,Y)+ I(X;\hat{X}|Y)-\delta_{\epsilon}']\\
& \geq n[\triangle -\delta_{\epsilon}'],
\end{align*}
if $\triangle \leq H(\hat{X}|U,Y)+ I(X;\hat{X}|Y)$,
where $(a)$ follows since $U^n$ is a function of $X^n$, and $(b)$ follows from the Markov chain $\hat{X}^n- (U^n,Y^n)-(X^n,\mathcal{C}_n)$, and from  bounding the terms $H(X^n|Y^n,\hat{X}^n,\mathcal{C}_n)$ and $H(\hat{X}^n|U^n,Y^n)$ similarly as in Lemmas \ref{lemma:Chap_ENDUSER_1} and \ref{lemma:Chap_ENDUSER_2}, respectively.

The cardinality bound on the set $\mathcal{U}$ in $\mathcal{R}_{\text{in}}^{(\text{help})}$ can be proved using the support lemma that $\mathcal{U}$ should have $|\mathcal{X}|-1$ elements to preserve $P_{X}$, plus four more for $H(X|U,Y)$, $H(\hat{X}|U,Y)$, $E[d(F(X,Y),\hat{X}]$, and the Markov relation $\hat{X}-(U,Y)-X$.

The outer bound proof for equivocation constraint is as follows. Let $U_i \triangleq (W,X^{i-1},Y^{n \setminus i})$ and $V_i \triangleq (X^{i-1},Y^{n \setminus i}, \hat{X}^{n \setminus i})$ which satisfy $U_i-X_i-Y_i$ and $(V_i,\hat{X}_i)-(U_i,Y_i)-X_i$ for all $i=1,\ldots,n$. It follows that
\begin{align*}
n(\triangle - \delta_n) &\leq H(\hat{X}^n|Y^n) \\
&= H(\hat{X}^n|X^n,Y^n)+ I(\hat{X}^n;X^n,|Y^n)\\
&\overset{(a)}{=} H(\hat{X}^n|X^n,W,Y^n)+ I(\hat{X}^n;X^n,|Y^n)\\
&= \sum_{i=1}^n H(\hat{X}_i|W,\hat{X}^{i-1},X^n,Y^n)+ H(X_i|Y_i) - H(X_i|X^{i-1},Y^n,\hat{X}^n)\\
&\overset{(b)}{\leq} \sum_{i=1}^n H(\hat{X}_i|U_i,Y_i)+H(X_i|Y_i) - H(X_i|V_i,\hat{X}_i,Y_i)\\
&= \sum_{i=1}^n H(\hat{X}_i|U_i,Y_i)+I(X_i;V_i,\hat{X}_i|Y_i),
\end{align*}
where $(a)$ follows from the deterministic encoder, $(b)$ follows from the definition of $U_i,V_i$. The proof ends using the standard time-sharing argument.

\section{Cardinality Bounds of The Sets $\mathcal{T}$ and $\mathcal{U}$ in Proposition \ref{proposition:Chap_ENDUSER_1}} \label{appendix:Chap_ENDUSER_cardinality}
Consider the expression of $\mathcal{R}_{\text{in}}^{(\text{eve})}$ in Proposition \ref{proposition:Chap_ENDUSER_1}:
\begin{align*}
R &\geq I(X;U|Y) \\
D &\geq E[d(F(X,Y),\hat{X})] \\
\triangle
& \leq H(\hat{X}|U,Y) + I(\hat{X};Y|T)-I(\hat{X};Z|T) -I(U;Z|T,Y,\hat{X}),
\end{align*}
for some $U \in \mathcal{U}$, $T \in \mathcal{T}$ such that $T-U-X-(Y,Z)$ and $\hat{X}-(U,Y)-(X,Z,T)$ form Markov chains.

We can rewrite some mutual information terms in the rate and equivocation expressions above and get
\begin{align*}
     R &\geq H(X|Y)-H(X,Y|U)+H(Y|U),\\
     \triangle &\leq H(\hat{X},Y|U)-H(Y|U)+ I(Y;\hat{X},Z|T)-H(Z|T)+H(Z,Y|U)-H(Y|U).
\end{align*}
We will prove that the random variables $T$ and $U$ may be replaced by new ones, satisfying $|\mathcal{T}| \leq  |\mathcal{X}|+5$, $|\mathcal{U}| \leq  (|\mathcal{X}|+5)(|\mathcal{X}|+4)$, and preserving the terms $H(X,Y|U)-H(Y|U)$, $H(\hat{X},Y|U)-H(Y|U)$, $H(Z,Y|U)-H(Y|U)$,  $I(Y;\hat{X},Z|T)-H(Z|T)$, $E[d(F(X,Y),\hat{X})]$, and the Markov relations.

First we bound the cardinality of the set $\mathcal{T}$.
Let us define the following $|\mathcal{X}|+5$ continuous functions of $p(u|t)$, $u \in \mathcal{U}$,
\begin{align*}
&f_{j}(p(u|t)) = \sum_{u \in \mathcal{U}}p(u|t)p(x|u,t),\ j=1,\ldots,|\mathcal{X}|-1, \\
&f_{|\mathcal{X}|}(p(u|t))  = H(X,Y|U,T=t)-H(Y|U,T=t)\\ & \qquad \qquad \qquad = H(X,U,Y|T=t)-H(U,Y|T=t),\\
&f_{|\mathcal{X}|+1}(p(u|t)) =H(\hat{X},Y|U,T=t)-H(Y|U,T=t)\\ & \qquad \qquad \qquad = H(\hat{X},U,Y|T=t)-H(U,Y|T=t),\\
&f_{|\mathcal{X}|+2}(p(u|t)) = H(Z,Y|U,T=t)-H(Y|U,T=t)\\ & \qquad \qquad \qquad = H(Z,U,Y|T=t)-H(U,Y|T=t),\\
&f_{|\mathcal{X}|+3}(p(u|t)) = I(Y;\hat{X},Z|T=t)-H(Z|T=t),\\
&f_{|\mathcal{X}|+4}(p(u|t)) = H(\hat{X}|U,Y,X,Z,T=t)\\ & \qquad \qquad \qquad =H(\hat{X},U,Y,X,Z|T=t)-H(U,Y,X,Z|T=t),\\
&f_{|\mathcal{X}|+5}(p(u|t)) = E[d(F(X,Y),\hat{X})|T=t].
\end{align*}
The corresponding averages are
\begin{align*}
& \sum_{t \in \mathcal{T}}p(t)f_{j}(p(u|t))=P_{X}(x),\ j=1,\ldots,|\mathcal{X}|-1, \\
& \sum_{t \in \mathcal{T}}p(t) f_{|\mathcal{X}|}(p(u|t))= H(X,U,Y|T)-H(U,Y|T), \\
& \sum_{t \in \mathcal{T}}p(t) f_{|\mathcal{X}|+1}(p(u|t))= H(\hat{X},U,Y|T)-H(U,Y|T), \\
& \sum_{t \in \mathcal{T}}p(t) f_{|\mathcal{X}|+2}(p(u|t))= H(Z,U,Y|T)-H(U,Y|T), \\
& \sum_{t \in \mathcal{T}}p(t) f_{|\mathcal{X}|+3}(p(u|t))= I(Y;\hat{X},Z|T)-H(Z|T), \\
& \sum_{t \in \mathcal{T}}p(t) f_{|\mathcal{X}|+4}(p(u|t))= H(\hat{X},U,Y,X,Z|T)-H(U,Y,X,Z|T), \\
& \sum_{t \in \mathcal{T}}p(t) f_{|\mathcal{X}|+5}(p(u|t))= E[d(F(X,Y),\hat{X})].
\end{align*}
According to the support lemma \cite{CsiszarBook}, we can deduce that there exist a new random variable $T'$ jointly distributed with $(X,Y,Z,U,\hat{X})$ whose alphabet size is $|\mathcal{T}'|= |\mathcal{X}|+5$, and numbers $\alpha_{i} \geq 0$ with $\sum_{i=1}^{|\mathcal{X}|+5}\alpha_{i} =1$ that satisfy
\begin{align*}
&\sum_{i=1}^{|\mathcal{X}|+5}\alpha_{i} f_{j}(P_{U|T'}(u|i)) = P_{X}(x),\ j=1,\ldots,|\mathcal{X}|-1, \\
&\sum_{i=1}^{|\mathcal{X}|+5}\alpha_{i}f_{|\mathcal{X}|}(P_{U|T'}(u|i)) = H(X,U,Y|T')-H(U,Y|T'),\\
&\sum_{i=1}^{|\mathcal{X}|+5}\alpha_{i}f_{|\mathcal{X}|+1}(P_{U|T'}(u|i)) = H(\hat{X},U,Y|T')-H(U,Y|T'),\\
&\sum_{i=1}^{|\mathcal{X}|+5}\alpha_{i}f_{|\mathcal{X}|+2}(P_{U|T'}(u|i)) = H(Z,U,Y|T')-H(U,Y|T'),\\
&\sum_{i=1}^{|\mathcal{X}|+5}\alpha_{i}f_{|\mathcal{X}|+3}(P_{U|T'}(u|i)) = I(Y;\hat{X},Z|T')-H(Z|T'),\\
&\sum_{i=1}^{|\mathcal{X}|+5}\alpha_{i}f_{|\mathcal{X}|+4}(P_{U|T'}(u|i)) = H(\hat{X},U,Y,X,Z|T')-H(U,Y,X,Z|T'),\\
&\sum_{i=1}^{|\mathcal{X}|+5}\alpha_{i}f_{|\mathcal{X}|+5}(P_{U|T'}(u|i)) = E[d(F(X,Y),\hat{X})].
\end{align*}
Note that
\begin{align*}
&H(X,U,Y|T')-H(U,Y|T') \\
&= H(X,U,Y|T)-H(U,Y|T)\\
& \overset{(\star)}{=} H(X,Y|U)-H(Y|U),
\end{align*}
where $(\star)$ follows from the Markov chain $T-U-X-(Y,Z)$.
Similarly, from the Markov chains $T-U-X-(Y,Z)$ and $\hat{X}-(U,Y)-(X,Z,T)$, we have that $H(\hat{X},U,Y|T')-H(U,Y|T')=H(\hat{X},Y|U)-H(Y|U)$ and $H(Z,U,Y|T')-H(U,Y|T')=H(Z,Y|U)-H(Y|U)$.
Since $P_{X}(x)$ is preserved, $P_{X,Y}(x,y)$ is also preserved.  Thus, $H(X|Y)$ is preserved.

Next we bound the cardinality of the set $\mathcal{U}$.
For each $t' \in \mathcal{T}'$, we define the following $|\mathcal{X}|+4$ continuous functions of $p(x|t',u)$,\ $x \in \mathcal{X} $,
\begin{align*}
&f_{j}(p(x|t',u)) = p(x|t',u),\ j=1,\ldots,|\mathcal{X}|-1, \\
&f_{|\mathcal{X}|}(p(x|t',u)) = H(X,Y|T'=t',U=u)-H(Y|T'=t',U=u),\\
&f_{|\mathcal{X}|+1}(p(x|t',u)) = H(\hat{X},Y|T'=t',U=u)-H(Y|T'=t',U=u),\\
&f_{|\mathcal{X}|+2}(p(x|t',u)) = H(Z,Y|T'=t',U=u)-H(Y|T'=t',U=u),\\
&f_{|\mathcal{X}|+3}(p(x|t',u)) = H(\hat{X},Y,X,Z|T'=t',U=u)-H(Y,X,Z|T'=t',U=u),\\
&f_{|\mathcal{X}|+4}(p(x|t',u)) = E[d(F(X,Y),\hat{X})|T'=t',U=u].
\end{align*}
Similarly to the previous part in bounding $|\mathcal{T}|$, there exists a new random variable $U'|\{T'=t'\} \sim p(u'|t')$ such that $|\mathcal{U}'| =  |\mathcal{X}|+4$ and $p(x|t')$, $H(X,Y|T'=t',U)-H(Y|T'=t',U)$, $H(\hat{X},Y|T'=t',U)-H(Y|T'=t',U)$, $H(Z,Y|T'=t',U)-H(Y|T'=t',U)$, $H(\hat{X},Y,X,Z|T'=t',U)-H(Y,X,Z|T'=t',U)$, and $E[d(F(X,Y),\hat{X})|T'=t']$ are preserved.

By setting $U'' =(U',T')$ where $\mathcal{U}'' = \mathcal{U}' \times \mathcal{T}'$, we have that $T'-U''-X-(Y,Z)$ forms a Markov chain. To see that the Markov chain $\hat{X}-(U'',Y)-(X,Z,T')$ also holds, we consider
\begin{align*}
&I(\hat{X};X,Z,T'|U'',Y) \\
& = I(\hat{X};X,Z|U',T',Y) \\
&= H(\hat{X}|U',T',Y) - H(\hat{X}|U',T',Y,X,Z)\\
&\overset{(a)}{=} H(\hat{X}|U,T',Y) - H(\hat{X}|U,T',Y,X,Z)\\
&\overset{(b)}{=} H(\hat{X}|U,T,Y) - H(\hat{X}|U,T,Y,X,Z)\\
&\overset{(c)}{=} 0,
\end{align*}
where $(a)$ follows from preservation by $U'$, $(b)$ follows from preservation by $T'$, and $(c)$ from the Markov chain $\hat{X}-(U,Y)-(X,Z,T)$.

Furthermore, we have the following preservations by $U''$,
\begin{align*}
&H(X,Y|U'')-H(Y|U'') \\
& = H(X,Y|U',T')-H(Y|U',T')\\
& \overset{(a)}{=} H(X,Y|U,T')-H(Y|U,T')\\
& \overset{(b)}{=} H(X,Y|U,T)-H(Y|U,T)\\
& \overset{(c)}{=} H(X,Y|U)-H(Y|U),
\end{align*}
where $(a)$ follows from preservation by $U'$, $(b)$ follows from preservation by $T'$, and $(c)$ follows from the Markov chain $T-U-X-(Y,Z)$.
Similarly, from preservation by $U'$ and $T'$, and the Markov chain $T-U-X-(Y,Z)$ and $\hat{X}-(U,Y)-(X,Z,T)$, we have that $H(\hat{X},Y|U'')-H(Y|U'') =H(\hat{X},Y|U)-H(Y|U)$ and $H(Z,Y|U'')-H(Y|U'')=H(Z,Y|U)-H(Y|U)$.

 Therefore, we have shown that $T \in \mathcal{T}$ and $U \in \mathcal{U}$ may be replaced by $T' \in \mathcal{T}'$ and $U'' \in \mathcal{U}''$ satisfying
 \begin{align*}
 |\mathcal{T}'|&= |\mathcal{X}|+5, \\
 |\mathcal{U}''| &= |\mathcal{T}'||\mathcal{U}'|= (|\mathcal{X}|+5)(|\mathcal{X}|+4),
 \end{align*}
and  preserving the terms $H(X,Y|U)-H(Y|U)$, $H(\hat{X},Y|U)-H(Y|U)$, $H(Z,Y|U)-H(Y|U)$,  $I(Y;\hat{X},Z|T)-H(Z|T)$, $E[d(F(X,Y),\hat{X})]$, and the Markov relations.

\end{document}